\newcommand{\epsdeltaone}{\epsilon_1}
\newcommand{\bestparetopayoffi}{u^*_i}
\newcommand{\bestparetopayoffj}{u^*_j}
\newcommand{\bestparetopayoffone}{u^*_1}
\newcommand{\allprogrn}{\allprog^{\text{rn}}}
\newcommand{\allprogreneg}{\allprog^{\text{RN}}}
\newcommand{\selection}{sel}
\newcommand{\renegb}{\mathbf{\selection}}
\theoremstyle{definition}
\newtheorem{definition}{Definition} 
\newtheorem{assumption}[definition]{Assumption}
\newtheorem*{assum:nopunish}{Assumption 7}
\newcommand{\pointx}{0.8}
\newcommand{\tritop}{2.4}
\newcommand{\tribottom}{0.267}
\title[AAMAS-2025 Formatting Instructions]{Safe Pareto Improvements
for \\
Expected Utility
Maximizers in Program Games}
\author{Anthony DiGiovanni}
\affiliation{
  \institution{Center on Long-Term Risk}
  \city{London}
  \country{United Kingdom}}
\email{anthony.digiovanni@longtermrisk.org}
\author{Jesse Clifton}
\affiliation{
  \institution{Center on Long-Term Risk}
  \city{London}
  \country{United Kingdom}}
\email{jesse.clifton@longtermrisk.org}
\author{Nicolas Macé}
\affiliation{
  \institution{Center on Long-Term Risk}
  \city{London}
  \country{United Kingdom}}
\email{nicolas.mace@longtermrisk.org}
\begin{abstract}
Agents in mixed-motive coordination problems such as Chicken may fail to coordinate on a Pareto-efficient outcome. 
Safe Pareto improvements (SPIs) were originally proposed 
to mitigate miscoordination in cases where 
players lack probabilistic beliefs as to how their 
agents will play a game;
agents are instructed to behave so as to guarantee a Pareto improvement on how they would play by default.
More generally, SPIs may be defined as transformations of strategy profiles such that all players are necessarily better off under the transformed profile.
In this work, we investigate
the extent to which SPIs can reduce downsides of miscoordination
between expected utility-maximizing agents. We 
consider games in which players submit computer programs
that can condition their decisions on each other's code, 
and use this property to construct SPIs using programs
capable of \textit{renegotiation}. 
We first
show that under mild conditions on players’ beliefs,
each player always prefers to use renegotiation.
Next, we show that
under similar assumptions,
each player always prefers
to be willing to renegotiate  
at least to the point at which 
they receive the lowest
payoff they can attain in any efficient
outcome.
Thus 
subjectively optimal play
guarantees players at least these payoffs,
without the need for coordination
on specific Pareto improvements.
\end{abstract}
\keywords{program equilibrium; bargaining; Pareto efficiency; Cooperative AI}
\newcommand{\BibTeX}{\rm B\kern-.05em{\sc i\kern-.025em b}\kern-.08em\TeX}
\begin{document}


\pagestyle{fancy}
\fancyhead{}


\maketitle 


\section{Introduction}

Artificially intelligent (AI) systems will increasingly advise
or make decisions on behalf of humans, including in interactions with
other agents. Thus there is a need for research on \textit{cooperative AI}
\citep{dafoe2020open,Conitzer2023Jun}: How can we design AI systems that are capable of 
interacting with other players in ways that lead to high social welfare? 
One way that AI systems assisting humans could fail 
to cooperate is by failing to coordinate 
on one of several Pareto-efficient equilibria. This
risk is especially large in \textit{bargaining problems}, 
where players have different preferences over 
Pareto-efficient equilibria (think of the game of Chicken).
These problems are particularly prone to miscoordination, where 
each player uses a strategy that is part of some Pareto-efficient equilibrium, but collectively
the players' strategies are not an equilibrium.
Bargaining problems are ubiquitous, including in high-stakes
negotiations over climate change,
nuclear proliferation, or military disputes, making
them a crucial area of study for cooperative AI.

\par We will explore
how the ability of AI systems to condition 
their decisions on each other's inner workings
could reduce downsides
of
miscoordination
in
bargaining problems.
The literature on \textit{program 
equilibrium} 
has shown how games played by computer programs 
that can read each other's source code admit more cooperative equilibria
in other challenges for cooperation such as the Prisoner's Dilemma
\citep{tennenholtz2004program,lavictoire2014program,oesterheld2019robust}.
\textit{Safe Pareto improvements 
(SPIs)} \citep{oesterheld2021safe} were proposed 
as a mitigation for inefficiencies in settings where
players have delegates play a game on their behalf, 
and have Knightian uncertainty (i.e., lack probabilistic beliefs
\citep{knight1921risk})
about how their delegates will play. 
Under an SPI, players 
change
their
default policies so as to guarantee Pareto improvement on the 
default outcome. For example, consider two parties~\A{} and~\B{} who would by default go to war over some territory.
They might instruct their delegates to,
instead, accept the outcome
of a lottery that allocates the territory
to~\A{} with the probability that~\A{}
would have won the war.

We will consider the extent to which SPIs 
can
mitigate
inefficiencies from
miscoordination when (i) players
do have probabilistic beliefs and
maximize subjective expected utility
and (ii) games are played by
computer programs that can condition on their counterparts' 
source code.
Our goal is to establish
guarantees against miscoordination
in the well-studied program game setting. Relaxations of standard assumptions in this setting
--- e.g., players can precisely read each other's programs' source code,
can syntactically verify if a program
follows some template \citep{tennenholtz2004program},
and participate in the program
game in the first place --- are left to future work.
While this is an idealized framework,
insights from studying program games could be
applied to more realistic interactions
between actors with some degree of conditional commitment ability.
For example,
countries engaging in climate negotiations
might write bills that specify when the country
would be bound to some policies conditional
on the terms of other countries' bills \citep{heitzig2023improvinginternationalclimatepolicy}. 
And, smart contracts implemented on a blockchain could
execute commitments to
transactions conditional on other
actors' contracts \citep{varian2010computer,sun2023cooperativeaidecentralizedcommitment}.

Our contributions are as
follows: 

\begin{enumerate}
    \item We construct SPIs in the program game setting
    using programs that \textit{renegotiate}. Such programs
    have a ``default'' program; check if their default
    played against their counterparts' defaults
    results in an inefficient outcome; and, if so, 
    call a renegotiation routine in an attempt to 
    Pareto-improve on the default outcome. 
    We examine when renegotiation would be
    used by players who optimize
    expected utility given their beliefs about 
    what programs
    their counterparts will use (i.e., in \textit{subjective equilibrium}
    \citep{subjective}).
    Under
    mild assumptions
    on players' beliefs, we show that
    SPIs are
    always used in subjective equilibrium
    (Propositions \ref{prop:spi} and \ref{prop:simultaneous}).
    \item We show that due
    to the ability to renegotiate, 
    under mild assumptions on players' beliefs,
    players
    always
    weakly prefer 
    programs
    that guarantee them
    at least
    the lowest payoff they can obtain
    on the Pareto frontier
    (Theorem~\ref{prop:yudpoint}). Following 
    \citet{Rabin1991AMO}, we call this
    payoff profile the \textit{Pareto meet minimum (PMM)}.
    Thus we provide for this setting a (partial) solution to the 
    ``SPI selection problem'' identified by 
    \citet{oesterheld2021safe}
    (hereafter, ``OC''), i.e., the problem that
    players must coordinate among SPIs in order to 
    Pareto-improve on default outcomes.
    The intuition for this is:
    The PMM is the most efficient
    point such that,
    no matter how aggressively the players bargain, no one expects to risk getting a worse
    deal by being willing to 
    renegotiate to that point.
    We also show in the appendix that the PMM bound is tight:
    In 
    mixed-motive games, 
    it is always 
    possible to find subjective equilibria
    in which players fail to Pareto-improve on the 
    PMM, even using iterated renegotiation
    (Proposition~\ref{proposition:inefficiency}). 
\end{enumerate}

\section{Related Work} 
\paragraph{Program equilibrium and commitment games.}
We 
build on  
program games,  
where computer players condition their actions
on each other's source code. 
Prior work has shown that the ability of computer-based
agents to condition their decisions on their
counterparts' programs can
enable
more efficient equilibria
\citep{mcafee,Howard1988May,rubinstein,tennenholtz2004program,lavictoire2014program,critch2019parametric,oesterheld2019robust,DiGiovanni2022Apr}. For example, 
\citet{mcafee}'s
program
\textit{``If other player's code == my code: Cooperate; Else: Defect''}
is a Nash equilibrium of the program game version of the
one-shot Prisoner's Dilemma in which both players cooperate. 
(See also the literature
on commitment games, e.g.,
\citet{kalai2010commitment,FORGES201364}.) 
However, this literature focuses on 
the Nash equilibria of program games, 
rather than studying failure to coordinate
on a Nash equilibrium as we do.
\paragraph{Coordination problems and equilibrium selection.}
There are large theoretical and empirical
literatures on how agents might coordinate
in complete information bargaining problems
(see
\citet{schuessler2019focal} and 
references therein). 
Most closely related to this 
paper is the literature 
on whether communication before playing 
a simultaneous-move game
can improve coordination
\citep{farrell1987cheap,Rabin1991AMO,farrell1996cheap,crawford2017let,he2019power}. 
\citet{Rabin1991AMO} considers solution concepts for games with 
pre-play communication called \textit{negotiated equilibrium (NGE)} and
\textit{negotiated rationalizability (NGR)}, where NGE assumes that
players know their counterpart's strategies exactly (up to randomization).
Rabin shows that under NGE players
are guaranteed at least their PMM payoff 
in bargaining problems, whereas under NGR they are not. NGR is closer to the 
notion of subjective equilibrium used
in
our
paper, which allows 
players to
have possibly-inaccurate beliefs about what programs their counterparts
will use.
\citet{Santos2000Jul} shows results analogous to \citet{Rabin1991AMO}'s
under cheap talk with alternating (rather than simultaneous) announcements.
Finally, OC proposed safe Pareto improvements
for mitigating inefficiencies from coordination failures.
We
discuss
OC
and its connections to the present work
at greater length
in Section~\ref{sec:sub:pspi}.

\section{Miscoordination and Safe Pareto Improvements
in Program Games} 

In this section, we introduce
the program games framework and subjective
equilibrium, the solution concept that is
our focus in this paper. Then we review 
OC's
safe Pareto improvements, and show how they
can be constructed in our setting using 
renegotiation.
Section~\ref{sec:disc}
contains a table summarizing
the notation used in this section and Section~\ref{sec:sub:selection}. Throughout the paper, our formalism will be for games with two players, for ease of exposition.
See appendix for
full proofs of our results in
the more general ${n}$-player
formalism.
The extension to ${n}$ players doesn’t introduce qualitatively new challenges. 
Intuitively, since players submit programs independently of each other, we
can apply the same arguments to the profile of counterparts for a given player, as we did to the single counterpart
in the two-player case.

\subsection{Setup: Program Games and Subjective Equilibrium}
\label{sub:proggame}

Two players $i=1,2$ will play a ``base game'' 
of complete information $\game = (\allact = \actspaceone \times \actspacetwo, (\payone, \paytwo))$.
Let $\actspacei$
be the set
of possible actions for player $i$,
and let $\payi(\fullact)$
be player $i$'s
payoff in~$\game$
when the players
follow an action profile 
$\fullact = (\actone, \acttwo)$.
Write 
$\fullpay(\fullact) = (\payone(\fullact), \paytwo(\fullact))$, and refer to the 
set of payoff profiles attainable 
by some $\fullact$ in~$\allact$ 
as the \termemph{feasible set}. 
Throughout, we use the index
$j$ for the player $j \neq i$.
For payoff profiles~$\textbf{x}$ and~$\textbf{y}$,
write~$\textbf{x} \succeq \textbf{y}$ if $x_i \geq y_i$ for all~$i$, and
$\textbf{x} \succ \textbf{y}$ if $x_i > y_i$ for all~$i$.

A program game $\game(\allprog)$ is a 
game in which a strategy
is a program that maps
the 
profile
of other players' programs
to an action in~$\game$.\footnote{We restrict to deterministic
programs for ease of exposition; the extension 
to probabilistic
programs,
as in, e.g., \citet{kalai2010commitment}, is straightforward.}
This way, each player's program implements
a commitment
to an action
conditional on the others' programs.
Assume the action sets of $\game$ are continuous; this is practically without loss of generality, because our program game setting can be
extended to a setting where players can use correlated randomization (see, e.g., \citet{kalai2010commitment}).
Here, $\allprog = \progspaceone \times \progspacetwo$,
where $\progspacei$ is a 
set of
computable functions
from~$\progspacej$
to~$\actspacei$.
We assume that all programs in $\progspacei$
halt against all programs in $ \progspacej$,
for each~$i$, as is standard in program game literature (see, e.g., \citet{tennenholtz2004program, oesterheld2019robust, oesterheld2021safe}).
(Each~$\progspacei$ can be viewed
as player~$i$'s ``default'' program
set,
which we will extend in
Section~\ref{sec:sub:pspi}
with a set of programs that
have a special structure.)

Player $i$'s program is $\progi \in \progspacei$.
For a program profile~$\fullprog = (\progone, \progtwo)$,
abusing notation,
let the action profile played in the base game
by players with a given program profile
be $\fullact(\fullprog) = (\progone(\progtwo), \progtwo(\progone))$.
After all programs are simultaneously 
submitted,
the induced
action profile 
$\fullact(\fullprog)$
is played in~$\game$.
Thus the payoff for player~$i$ in $\game(\allprog)$
resulting
from the program profile 
$\fullprog$
is
$\progpayi(\fullprog) = \payi(\fullact(\fullprog))$.

To capture the possibility of 
miscoordination,
we do not assume a Nash equilibrium 
is played.
Instead, each player~$i$
has beliefs as to what program $\progj$ the
other player
will use, distributed
according to a probability
distribution $\priorij$
(whose support may be a superset
of~$\progspacej$).\footnote{Allowing
for $\priorij$ to be
supported on a \textit{superset} of $\progspacej$ will 
be important when we consider
extensions of players'
program sets
with SPIs
in Section~\ref{sec:sub:pspi}.}
Then, a subjective equilibrium 
\citep{subjective}
is a profile of programs and 
beliefs such that each
player's
program maximizes expected 
utility with respect to their beliefs: 

\begin{definition}
Let
$\fulleq = (\eqprogone, \eqprogtwo)$
and $\fullprior = (\prioronetwo, \priortwoone)$
be profiles of
programs
and beliefs, respectively, in 
$\game(\allprog)$.
We say
$(\fulleq, \fullprior)$
is a \textbf{subjective equilibrium} of
$\game(\allprog)$ 
if, for each~$i$,
$$\eqprogi \in \argmax_{\progi \in \progspacei}
\mathbb{E}_{\progj \sim \priorij} \progpayi(\fullprog).$$
\end{definition}

Subjective equilibrium is, of course,
a weaker solution concept than Nash equilibrium
(or even rationalizable strategies \citep{Bernheim1984,Pearce1984}).
The results in this paper that follow
will be stronger than showing that a given
strategy is used in \textit{some} subjective equilibrium.
Instead, we will construct strategies such that,
for \textit{any} beliefs players might have under some assumptions,
and \textit{any} program profile they consider using,
our strategies are individually
(weakly) preferred by players over that program profile
---
and are thus used in a subjective equilibrium
associated with those beliefs. Therefore,
considering subjective equilibrium
will make our results stronger
than if we had assumed players'
beliefs satisfied a Nash equilibrium assumption.

The base games we are interested in are \textit{bargaining problems}, where players can miscoordinate in 
subjective equilibrium if they are sufficiently 
confident their counterparts will play favorably to them.
This is possible even when players are capable
of conditional commitments
as in program games:
\begin{exmp}
\label{ex:subjeq}
\textbf{(Miscoordination in subjective equilibrium)} 
Suppose
two principals delegate to AI assistants
to negotiate on their behalf
over the time for a meeting.
Call this the Scheduling Game
(Table~\ref{tab:meeting}).
The principals meet if and only if
the AIs agree on one of three possible time slots.
Each principal~$i$
most
prefers slot~$i$, 
but would rather meet at slot~3
than not at all.
Suppose each player~$i$
thinks~$j$
is sufficiently
likely to
use\footnote{Abusing notation, we write ``$p_i=\langle \text{pseudocode for } p_i 
\rangle$'' to describe programs $p_i$.}
$\progj^{C} = $ \textit{``Slot $i$ if other player's code == `always Slot $i$'; Else: Slot~$j$''}. 
Intuitively, this program ``demands'' the player's best outcome,
except against~$\progi^{D} =$ 
\textit{``always Slot $i$''}, which exploits this program.
Each player might believe the other is likely to use~$\progj^{C}$
because it can both exploit programs that yield to its demand
and avoid miscoordinating with~$\progi^{D}$.
Then it 
is subjectively optimal 
for each player to submit
$\progi^{D}$.
The pair of programs $(\progone^{D}, \progtwo^{D})$
played in a subjective equilibrium
under these beliefs
results in the maximally inefficient (Slot~1, Slot~2) outcome.
\end{exmp}

\begin{table}
    \centering
    \caption{Payoff matrix for the Scheduling Game}
    \begin{tabular}{c|c|c|c|}
          \multicolumn{1}{c}{} & \multicolumn{1}{c}{Slot 1} & \multicolumn{1}{c}{ Slot 2} & \multicolumn{1}{c}{ Slot 3}\\
     \cline{2-4}
      Slot 1 & $3,1$ & $0, 0$ & $0, 0$ \\
        \cline{2-4}
       Slot 2 & $0,0$ & $1, 3$ & $0, 0$ \\
        \cline{2-4}
       Slot 3 & $0,0$ & $0, 0$ & $1, 1$ \\
       \cline{2-4}
    \end{tabular}
    \label{tab:meeting}
\end{table}

\subsection{Constructing
Safe Pareto Improvements via Renegotiation}
\label{sec:sub:pspi}

Informally, safe Pareto improvements (SPIs) \citep{oesterheld2021safe} are transformations $\fullfs$ of strategy profiles --- in our 
case, program profiles $\fullprog$ --- 
such that, for any $\fullprog$, all players are at least
as well off under $\fullfs(\fullprog)$ as under~$\fullprog$. 
OC
focus on transformations induced by 
\textit{payoff} transformations,
and they formally 
define SPIs accordingly.
However, they note
that probability-1 Pareto improvements on 
players' 
default strategies can be achieved with other kinds of instructions
besides having 
delegates play a game with transformed payoffs (see OC, pg.~14).
Thus in this paper 
we define SPIs to be general transformations of 
strategy profiles that guarantee Pareto improvement:

\begin{definition}
\label{def:progSPI}
For a program game $\game(\allprog)$,
let
$\fullfs : \allprog \to \allprog'$
be a function of program profiles,
written
$\fullfs(\fullprog) = (\fsprogone(\progone), \fsprogtwo(\progtwo))$,
for some joint program space $\allprog' = \progspaceone' \times \progspacetwo'$.\footnote{The assumption above that programs halt 
against each
other extends to $\allprog'$.}
Then~$\fullfs$
is an \textbf{SPI for $\game(\allprog)$}
if,
for all
program profiles
$\fullprog$, we have
$\fullprogpay(\fullfs(\fullprog)) \succeq \fullprogpay(\fullprog)$;
and for
some
program profile
$\fullprog$,
there is some
$i'$ such that $\progpayip(\fullfs(\fullprog)) > \progpayip(\fullprog)$.
\end{definition}

A natural approach to constructing an SPI is to
construct programs
that,
when they are all used against each other, map 
the action profile returned by
default programs to a Pareto improvement
whenever the default programs would have otherwise miscoordinated (i.e.,
the action profile is inefficient).
We call this construction ``renegotiation,'' and call mappings of action
profiles to Pareto improvements
``renegotiation functions.''\footnote{Compare to  section ``Safe Pareto improvements under improved coordination'' in OC.}

\begin{definition}
Call 
$\smreneg : \allact \to \allact$
a \textbf{renegotiation function}
if:
\begin{enumerate}
    \item For every $\fullact$,
    $\fullpay(\smreneg(\fullact)) \succeq \fullpay(\fullact)$.
    \item For some $\fullact$
    and some~$i'$,
    $\payip(\smreneg(\fullact)) > \payip(\fullact)$.
\end{enumerate}
And let~$\fullrenegspace$ be the set of all renegotiation functions for the
given game~$\game$.
\end{definition}

We jointly define
the spaces of 
\textbf{renegotiation programs} $\onefsspacei(\smrenegi)$
for $i=1,2$
as those programs with
the structure of Algorithm~\ref{alg:renegotiation}, for
some:
\begin{itemize}
    \item renegotiation
    function
    $\smrenegi$ and
    \item ``default program''
    $\defprogi \in \progspacei \ \backslash \ \onefsspacei(\smrenegi)$.
\end{itemize}
(Note that the definition of Algorithm~\ref{alg:renegotiation}
for a given player~$i$ references the sets of programs given by 
Algorithm~\ref{alg:renegotiation} for
the \textit{other} player~$j$,
so this definition is not circular.)
For any program profile~$\fullprog \in  \onefsspaceone(\smrenegone) \times \onefsspacetwo(\smrenegtwo)$ and any renegotiation function~$\smreneg$, we write 
$\deffullprog = (\defprogone, \defprogtwo)$
and
$\smreneg(\fullact) = (\renone(\fullact), \rentwo(\fullact))$.

Renegotiation programs work as follows:
Consider the ``default outcome,'' the action profile
given by all players' default programs if they all use renegotiation programs (line~\ref{line:defout}).
Against any program~$\progj$
such that
the players' renegotiation functions (if any) don't all return the same Pareto improvement
on the default outcome,
$\progi \in \onefsspacei(\smrenegi)$ plays according to its
default program $\defprogi$ (lines~\ref{line:definner} and~\ref{line:defouter} in Algorithm~\ref{alg:renegotiation}).
Against a program profile
that is willing to renegotiate
to the same Pareto improvement,
however,~$\progi$ plays
its part of the Pareto-improved outcome (line~\ref{line:renego}).

\begin{algorithm}
\caption{Renegotiation program $\progi \in \onefsspacei(\smrenegi)$, for some $\defprogi$}
\label{alg:renegotiation}
\begin{algorithmic}[1]
\Require Counterpart program $\progj$
\If{$\progj \in \onefsspacej(\smrenegj)$ for some $\smrenegj$ $\in \fullrenegspace$}  \Comment{Check that $\progj$ renegotiates}
    \State $\deffullact \leftarrow \fullact(\deffullprog)$
    \label{line:defout}
    \If{$\smrenegi(\deffullact) = \smrenegj(\deffullact)$}
        \State \Return $\renii(\deffullact)$
         \Comment{Play renegotiation action}
         \label{line:renego}
    \Else
         \State \Return 
         $\defacti$
         \Comment{Play default against others' defaults}
         \label{line:definner}
    \EndIf
\Else
     \State \Return $\defprogi(\progj)$  \Comment{Play default}
     \label{line:defouter}
\EndIf
\end{algorithmic}
\end{algorithm}

It is easy to see that any 
possible Pareto improvement
(i.e., any possible mapping provided by a
renegotiation function)
can be implemented as an SPI via renegotiation programs:

\begin{proposition}
\label{prop:spi}
Let $\smreneg$ be a renegotiation function. 
For $i=1,2$, define 
$f_i: P_i \rightarrow \onefsspacei(\smreneg)$ such that,
for each $\progi \in P_i$, $\fsprogi(\progi)$
is of the form given in Algorithm 
\ref{alg:renegotiation} with 
$\defprog{\fsprogi(\progi)} = \progi$. Then, the function
$\fullfs: \fullprog \mapsto (\fsprogone(\progone), \fsprogtwo(\progtwo))$ is an SPI.
\end{proposition}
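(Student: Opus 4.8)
The plan is to verify directly the two defining conditions of an SPI from Definition~\ref{def:progSPI} for the constructed map $\fullfs$. Everything reduces to computing the action profile that is played when the two constructed programs $\fsprogone(\progone)$ and $\fsprogtwo(\progtwo)$ are run against each other, for an arbitrary profile $\fullprog = (\progone, \progtwo)$. So the first step is to fix such a $\fullprog$ and trace Algorithm~\ref{alg:renegotiation} for each player.

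Tracing the algorithm is the core of the argument. By construction each $\fsprogi(\progi)$ lies in $\onefsspacei(\smreneg)$ and has default program $\defprog{\fsprogi(\progi)} = \progi$. Hence, when $\fsprogone(\progone)$ reads its counterpart $\fsprogtwo(\progtwo)$, the membership test on the first line succeeds with the identified renegotiation function equal to $\smreneg$, and symmetrically for the other player. Each program then extracts the pair of defaults, which is exactly $\deffullprog = (\progone, \progtwo) = \fullprog$, so both compute the same default outcome $\deffullact = \fullact(\fullprog)$. Because both programs carry the same renegotiation function $\smreneg$, the equality test $\smrenegone(\deffullact) = \smrenegtwo(\deffullact)$ holds trivially, each returns its renegotiation action, and the joint action played is $\smreneg(\fullact(\fullprog))$; that is, $\fullact(\fullfs(\fullprog)) = \smreneg(\fullact(\fullprog))$. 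The weak-improvement condition $\fullprogpay(\fullfs(\fullprog)) \succeq \fullprogpay(\fullprog)$ is then immediate from the first defining condition of a renegotiation function, applied at $\fullact(\fullprog)$; and since $\fullprog$ was arbitrary this establishes the first requirement.

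For the strict-improvement requirement I would invoke the second defining condition of a renegotiation function, which supplies an action profile $\fullact^{*}$ and a player $i'$ with $\payip(\smreneg(\fullact^{*})) > \payip(\fullact^{*})$, and then exhibit a profile $\fullprog^{*} \in \allprog$ whose default outcome realizes $\fullact^{*}$ --- for instance the pair of constant programs playing the two coordinates of $\fullact^{*}$. Applying the computation above to $\fullprog^{*}$ yields $\progpayip(\fullfs(\fullprog^{*})) > \progpayip(\fullprog^{*})$, completing the verification. I expect the main difficulty to be bookkeeping rather than conceptual: one must check that the mutual recognition step and the default-extraction step are well-defined despite the mutually referential definitions of $\onefsspaceone(\smreneg)$ and $\onefsspacetwo(\smreneg)$ (noting, as the text does, that this is not circular), and that each $\progi \in P_i$ is a legitimate default, i.e.\ $\progi \notin \onefsspacei(\smreneg)$. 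The one genuinely nontrivial point is realizability for the strict-improvement clause: the action profile $\fullact^{*}$ must be attainable by some program profile in the domain $P_1 \times P_2$, which holds under a mild richness assumption on the program sets (e.g.\ that they contain the relevant constant programs).
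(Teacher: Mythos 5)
Your proof is correct and takes essentially the same approach as the paper: the paper's entire proof is the statement that the claim ``follows immediately from the definitions of renegotiation function, Algorithm~\ref{alg:renegotiation}, and SPI,'' and your trace of the mutual-recognition step, the shared default outcome $\fullact(\fullprog)$, and the trivially satisfied equality test is exactly that unfolding, yielding $\fullact(\fullfs(\fullprog)) = \smreneg(\fullact(\fullprog))$ and hence both SPI conditions. Your caveat about the strict-improvement clause --- that the witnessing action profile from condition~2 of the renegotiation-function definition must actually be realizable as $\fullact(\fullprog^{*})$ for some $\fullprog^{*} \in P_1 \times P_2$, e.g.\ via constant programs --- is a fair point that the paper's one-line proof silently glosses over, and making it explicit only strengthens the argument.
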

\begin{proof}
This follows immediately from 
the definitions of renegotiation
function, Algorithm 
\ref{alg:renegotiation}, and 
SPI.
\end{proof}

 \begin{exmp}
\label{ex:SPIdemo}
\textbf{(SPI using renegotiation)} 
In Example \ref{ex:subjeq},
the players miscoordinated
in the Scheduling Game.
However, each player $i$ might reason that,
if they were to renegotiate with~$j$, a
renegotiation function
that is fair enough to both players that
they would both be willing to use it is: Map
each outcome where players 
choose different slots
to 
the symmetric
(Slot~3, Slot~3) outcome.
So, they could be better off using transformed versions
of their defaults
that renegotiate in this way.
\end{exmp}


\subsection{Incentives to Renegotiate}
\label{sec:implement}

When is the use of renegotiation 
guaranteed in subjective equilibrium
in program games?
SPIs by definition make all players
(weakly) better off
\textit{ex post},
but
it remains to show that
players
prefer to renegotiate \textit{ex ante}. Intuitively, one might worry
that players
will choose not to accept a Pareto improvement
in order to avoid losing bargaining power.

It is plausible
that, all else equal,
players prefer strategies
that 
admit more opportunities for
coordination.
So, suppose
players always prefer
a renegotiation program
over a non-renegotiation program
if their expected
utility is unchanged.
Then Proposition~\ref{prop:simultaneous}
shows that,
under a mild assumption on players'
beliefs,
each player 
always
prefers to
transform their default program
into \textit{some}
renegotiation program.
That is, each player~$i$ always prefers to
use a program
in $\onefsspacei = \bigcup_{\smreneg} \onefsspacei(\smreneg)$
(so their program profile is in $\allprogrn = \onefsspaceone \times \onefsspacetwo$).

For this result, we assume 
(Assumption~\ref{assum:rennopun})
the following holds 
for any program profile $\fullprog$
given by (a) a program used by some player~$i$ in subjective
equilibrium and (b) a program
in the support of player~$i$'s beliefs:\
If the programs in $\fullprog$ don't
renegotiate with each other, then,
a program should respond equivalently to
any renegotiation program as it
would respond
to that program's default.
This is because it seems implausible
that players would
respond
differently to renegotiation programs
that do not respond differently to them
(in particular, 
``punish'' renegotiation), all else equal.
(All full proofs are
in the appendix.)

\begin{assumption}\label{assum:rennopun}
We say that players with beliefs~$\fullprior$
\textbf{are certain 
that renegotiation won't be punished}
if the following holds.
    Take any renegotiation function $\smreneg$ $\in \fullrenegspace$;
    any renegotiation program
    $\progi \in \onefsspacei(\smreneg)$;
    and any~$\progj$ in the support of~$\priorij$
    such that the programs in $\fullprog$ don't renegotiate with each other.
    (I.e.,
    there is no $\smrenegj$
    such that 
    $\progj \in \onefsspacej(\smrenegj)$
    where $\smrenegj(\fullact(\deffullprog)) = \smreneg(\fullact(\deffullprog))$.)
    Then:
    \begin{enumerate}
        \item $\progj(\progi) = \progj(\defprogi)$.
        \item If~$\defprogi$ is used in subjective equilibrium with respect
        to~$\priorij$,
        and
        $\progj \in \onefsspacej$,
        we have~$\defprogi(\progj) = \defprogi(\defprogj)$.
    \end{enumerate}
\end{assumption}


\begin{restatable}{proposition}{spirational}
\label{prop:simultaneous}
Let $\game(\allprog)$
be any program game.
Let~$\fullprior$ be any belief profile
satisfying the assumption
that players
are certain 
that renegotiation won't be punished 
(Assumption~\ref{assum:rennopun}).
And, for some arbitrary renegotiation function~$\smreneg$, for each $i$ and $\progi \in \progspacei$, let $\irfsprogi(\progi)$
be the program
of the form in Algorithm~\ref{alg:renegotiation} with 
$\defprog{\irfsprogi(\progi)} = \progi$.
Then,
for every
subjective equilibrium~$(\fulleq, \fullprior)$
of $\game(\allprog \cup \allprogrn)$
where $\eqprogip \notin \onefsspaceip$ for some~$i'$,
there exists $\fullprog' \in \allprogrn$
such that:
\begin{enumerate}
    \item For all~$i$, $$\progi' = \begin{cases}
        \eqprogi,& \text{if $\eqprogi \in \onefsspacei(\smrenegi)$ for some $\smrenegi$;} \\
        \irfsprogi(\eqprogi),& \text{else.}
    \end{cases}$$
    \item $(\fullprog', \fullprior)$
    is a subjective equilibrium
    of  $\game(\allprog \cup \allprogrn)$.
\end{enumerate}
\end{restatable}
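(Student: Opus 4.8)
The plan is to exploit the fact that the subjective-equilibrium condition for each player is stated entirely in terms of that player's fixed beliefs $\priorij$ and never references the program the counterpart actually submits. Hence replacing $\fulleq$ by $\fullprog'$ does not change anyone's optimization problem $\max_{\progi \in \progspacei \cup \onefsspacei}\mathbb{E}_{\progj \sim \priorij}\progpayi(\progi,\progj)$, and to establish part~2 it suffices to verify that each $\progi'$ lies in the corresponding $\argmax$. When $\eqprogi \in \onefsspacei(\smrenegi)$ this is immediate because $\progi' = \eqprogi$ was already a maximizer, so fix a player $i$ with $\eqprogi \notin \onefsspacei$ and write $\progi' = \irfsprogi(\eqprogi)$, a renegotiation program whose default is $\eqprogi$ and whose renegotiation function is $\smreneg$. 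Since $\eqprogi$ is a maximizer and $\irfsprogi(\eqprogi)$ is itself an admissible strategy, we get $\mathbb{E}_{\progj \sim \priorij}\progpayi(\irfsprogi(\eqprogi),\progj) \leq \mathbb{E}_{\progj \sim \priorij}\progpayi(\eqprogi,\progj)$ for free; the entire content of the proof is the reverse inequality, which I would establish pointwise in $\progj$ over the support of $\priorij$.

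I would split the support into three regions according to how $\progj$ interacts with $\irfsprogi(\eqprogi)$. If $\progj \notin \onefsspacej$, then $\irfsprogi(\eqprogi)$ drops to its default branch and plays $\eqprogi(\progj)$, while part~1 of Assumption~\ref{assum:rennopun}---applicable precisely because the pair cannot renegotiate---gives $\progj(\irfsprogi(\eqprogi)) = \progj(\eqprogi)$; the induced action profile, hence $i$'s payoff, is therefore identical to that of $(\eqprogi,\progj)$. If instead $\progj \in \onefsspacej(\smrenegj)$ but its renegotiation function disagrees with $\smreneg$ on the default outcome $\deffullact$ (the profile $\fullact(\deffullprog)$ induced by the two defaults $\eqprogi$ and $\defprogj$), then both programs take their non-renegotiating branches and jointly realize $\deffullact$, whereas part~2 of the assumption---now in force, and using that $\eqprogi$ is played in the given subjective equilibrium---yields $\eqprogi(\progj) = \eqprogi(\defprogj)$, so $(\eqprogi,\progj)$ also realizes $\deffullact$ and the payoffs coincide.

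The remaining region is the crux, and I expect it to be the main obstacle: $\progj \in \onefsspacej(\smrenegj)$ with $\smrenegj(\deffullact) = \smreneg(\deffullact)$, i.e.\ counterparts that actually renegotiate with $\irfsprogi(\eqprogi)$. Here $\irfsprogi(\eqprogi)$ secures $\payi(\smreneg(\deffullact)) \geq \payi(\deffullact)$ by the defining property of a renegotiation function, so it is enough to show that $(\eqprogi,\progj)$ earns exactly $\payi(\deffullact)$, i.e.\ again that $\eqprogi(\progj) = \eqprogi(\defprogj)$. Assumption~\ref{assum:rennopun} does not apply directly, since its hypothesis explicitly excludes renegotiating pairs. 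The device I would use is to invoke the assumption through a \emph{different} renegotiation function: as long as $\deffullact$ admits more than one weak Pareto improvement, I can pick $\smreneg'$ with $\smreneg'(\deffullact) \neq \smrenegj(\deffullact)$, so that the renegotiation program in $\onefsspacei(\smreneg')$ with default $\eqprogi$ fails to renegotiate with $\progj$; part~2 applied to that program yields the desired identity $\eqprogi(\progj) = \eqprogi(\defprogj)$, which is a statement about $\eqprogi$ and $\progj$ alone. The one situation this device does not reach is a Pareto-efficient $\deffullact$, where no renegotiation function can strictly improve it; this boundary case requires separate care, appealing directly to the no-punishment reading to rule out that $\eqprogi$ distinguishes $\progj$ from $\defprogj$ when renegotiation offers no gain over the default outcome.

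Assembling the three regions gives $\mathbb{E}_{\progj \sim \priorij}\progpayi(\irfsprogi(\eqprogi),\progj) \geq \mathbb{E}_{\progj \sim \priorij}\progpayi(\eqprogi,\progj)$, which together with the reverse inequality already noted is an equality. Thus $\irfsprogi(\eqprogi)$ is also a maximizer of player $i$'s objective against $\priorij$; as this holds for every $i$, each $\progi'$ is a best response and $(\fullprog',\fullprior)$ is a subjective equilibrium of $\game(\allprog \cup \allprogrn)$.
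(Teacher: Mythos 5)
Your proposal follows essentially the same skeleton as the paper's proof: a pointwise comparison over the support of $\priorij$, the identical three-way case split ($\progj \notin \onefsspacej$; $\progj \in \onefsspacej(\smrenegj)$ with $\smrenegj$ disagreeing with $\smreneg$ at the default outcome; $\smrenegj$ agreeing), the same two invocations of Assumption~\ref{assum:rennopun} in the first two cases, and the same assembly (pointwise weak improvement plus admissibility of $\irfsprogi(\eqprogi)$ yields that it is also a maximizer, hence part~2 of the claim). The one genuine difference is your handling of the agreeing case, and there you are more careful than the paper. The paper settles that case with the words ``so by the above,'' silently reusing the identity $\eqprogi(\progj) = \eqprogi(\defprogj)$ that the disagreeing case extracted from part~2 of Assumption~\ref{assum:rennopun}; but, exactly as you observe, the assumption's hypothesis --- that the pair of programs ``don't renegotiate with each other'' --- fails precisely when the renegotiation functions agree at $\deffullact$, so the assumption cannot be invoked with the renegotiation program taken to be $\irfsprogi(\eqprogi)$. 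Your repair --- invoking part~2 through a different renegotiation program with the same default $\eqprogi$ but a renegotiation function $\smreneg'$ disagreeing with $\smrenegj$ at $\deffullact$, which is legitimate because the conclusion $\eqprogi(\progj) = \eqprogi(\defprogj)$ refers only to the default --- is exactly what is needed to make the paper's ``by the above'' rigorous. The boundary case you flag (a default outcome $\deffullact$ admitting no weak Pareto improvement other than itself, so that every renegotiation function is forced to agree with $\smrenegj$ there) is a real soft spot, but note that it is a soft spot of the paper's own argument and of Assumption~\ref{assum:rennopun} as literally stated, not a defect of your proof relative to the paper's: in that case the stated assumption places no constraint whatsoever on $\eqprogi(\progj)$, and the paper's proof tacitly proceeds as if it did. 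In short: same approach, correct wherever the paper's proof is correct, with the one delicate step made explicit and partially repaired rather than elided.
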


\begin{sketch}
For any non-renegotiation
program for player~$i$,
construct
a renegotiation program
by letting this program be
the default of
Algorithm~\ref{alg:renegotiation}.
If the other players'
programs
don't renegotiate
to the same outcome
as~$i$'s program,
then~$i$ uses their default,
so by the no-punishment assumption
they achieve the same payoff as in the original subjective
equilibrium.
Otherwise,
renegotiation
Pareto-improves on the default,
so the player is 
better off
using the renegotiation program.
\end{sketch}

\section{The SPI Selection Problem and Conditional Set-Valued Renegotiation}
\label{sec:sub:selection}

To Pareto-improve on the default outcome, 
the renegotiation
programs defined
in Section~\ref{sec:sub:pspi}
require players to coordinate on the renegotiation
function.
So does renegotiation just reproduce the same coordination
problem it was intended to solve? 
This is a general problem for SPIs, 
referred to by OC
as the ``SPI selection problem.''\footnote{OC give a brief informal
characterization of an
idea similar to our proposed partial
solution to SPI selection (p.\ 39): ``To do so, a player
picks an instruction that is very compliant (“dove-ish”) w.r.t. what SPI is
chosen, e.g., one that simply goes with whatever SPI the other players demand as long as that SPI cannot further be safely Pareto-improved upon.''
However, our approach does not require
complying with whatever SPI the other player demands.}

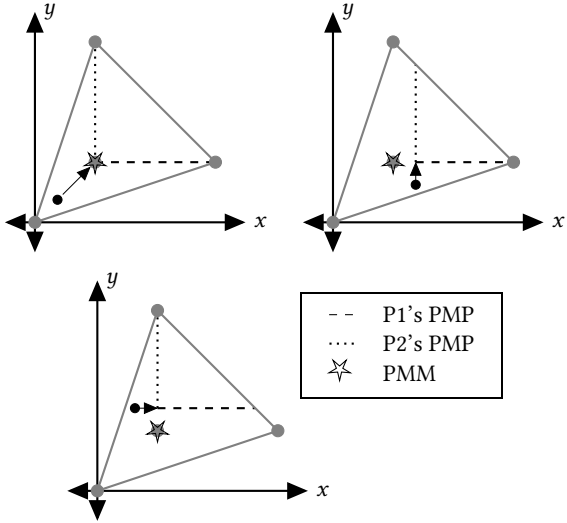
\begin{figure}
    \centering

    \begin{tabular}{cc}
         \begin{tikzpicture}[scale=1]
    \usetikzlibrary {shapes.geometric} 

\draw[thick,<->] (-0.4, 0) -- (2.8, 0) node[right] {$x$};
\draw[thick,<->] (0, -0.4) -- (0, 2.8) node[right] {$y$};

\draw[thick,dashed,-] (0.8, 0.8) -- (2.4, 0.8);
\draw[thick,dotted,-] (0.8, 0.8) -- (0.8, 2.4);


\draw[gray, fill=gray] (0, 0) circle (0.08);
\draw[gray, fill=gray] (2.4, 0.8) circle (0.08);
\draw[gray, fill=gray] (0.8, 2.4) circle (0.08);
\draw[gray, fill=gray] (0.8, 0.8) circle (0.08);
\node [star,
 star point height=.01cm,
 minimum size=0.01cm, 
 star point ratio=0.3,
 draw]
       at (\pointx,\pointx) {};


\draw[black, fill=black] (0.3, 0.3) circle (0.06);

   \draw[black,thin,->,line width=0.02pt] (0.37, 0.37) -- (0.73, 0.73);


\draw[gray, thick] (0,0) -- (2.4,0.8) -- (0.8,2.4) -- cycle;

  
\end{tikzpicture}

&

  \begin{tikzpicture}[scale=1]
    \usetikzlibrary {shapes.geometric} 

\draw[thick,<->] (-0.4, 0) -- (2.8, 0) node[right] {$x$};
\draw[thick,<->] (0, -0.4) -- (0, 2.8) node[right] {$y$};

\draw[thick,dashed,-] (1.1, 0.8) -- (2.4, 0.8);
\draw[thick,dotted,-] (1.1, 0.8) -- (1.1, 2.1);
\draw[gray, fill=gray] (0, 0) circle (0.08);
\draw[gray, fill=gray] (2.4, 0.8) circle (0.08);
\draw[gray, fill=gray] (0.8, 2.4) circle (0.08);
\draw[gray, fill=gray] (0.8, 0.8) circle (0.08);
\node [star,
 star point height=.01cm,
 minimum size=0.01cm, 
 star point ratio=0.3,
 draw]
       at (\pointx,\pointx) {};


\draw[black, fill=black] (1.1, 0.5) circle (0.06);

   \draw[black,thin,->,line width=0.02pt] (1.1, 0.55) -- (1.1, 0.79);


\draw[gray, thick] (0,0) -- (2.4,0.8) -- (0.8,2.4) -- cycle;

  
\end{tikzpicture}

\\
\end{tabular}

  \begin{tikzpicture}[scale=1]
    \usetikzlibrary {shapes.geometric} 

\draw[thick,<->] (-0.4, 0) -- (2.8, 0) node[right] {$x$};
\draw[thick,<->] (0, -0.4) -- (0, 2.8) node[right] {$y$};

\draw[thick,dashed,-] (0.8, 1.1) -- (2.1, 1.1);
\draw[thick,dotted,-] (0.8, 1.1) -- (0.8, 2.4);
\draw[gray, fill=gray] (0, 0) circle (0.08);
\draw[gray, fill=gray] (2.4, 0.8) circle (0.08);
\draw[gray, fill=gray] (0.8, 2.4) circle (0.08);
\draw[gray, fill=gray] (0.8, 0.8) circle (0.08);
\node [star,
 star point height=.01cm,
 minimum size=0.01cm, 
 star point ratio=0.3,
 draw]
       at (\pointx,\pointx) {};


\draw[black, fill=black] (0.5, 1.1) circle (0.06);

   \draw[black,thin,->,line width=0.02pt] (0.55, 1.1) -- (0.79, 1.1);


\draw[gray, thick] (0,0) -- (2.4,0.8) -- (0.8,2.4) -- cycle;


\node[anchor=north east] at (5.5,2.75) {
\fbox{
\begin{tabular}{ll}
    \tikz[baseline]{\draw[dashed] (0,0.1) -- (0.35,0.1);} & P1's PMP \\
 \tikz[baseline]{\draw[dotted,thick] (0,0.1) -- (0.35,0.1);} & P2's PMP \\
 \tikz \node [star,
 star point height=.01cm,
 minimum size=0.01cm, 
 star point ratio=0.3,
 draw]
       at (\pointx,\pointx) {}; & PMM \\
\end{tabular}
}
};
  
\end{tikzpicture}

    \caption{Illustration of the Pareto meet projection (PMP) of three different
    outcomes (black points) in the Scheduling
    Game, for each player.
    Gray points
    represent payoffs at 
    each pure strategy profile.
    Each black point is mapped via a player's PMP (black arrows) to
    a set containing
    a)
    the ``nearest'' point in the Pareto meet
    and b)
    all points better for the given player and
    no better for the other player than (a).
    }
    \label{fig:projection}
    \Description{This figure shows three different payoff plots in the Scheduling Game, each illustrating how the Pareto meet projection (PMP) maps different outcomes to sets of points. The graph shows payoff space with Player 1's payoff on the x-axis and Player 2's payoff on the y-axis. Gray points mark the payoffs from pure strategy profiles: (3,1), (1,3), and (1,1). The three black points represent different outcomes being mapped via PMP: (0,0), (1,1), and (2,2). Black arrows show how each point is mapped to its corresponding PMP set. For example, the miscoordination outcome (0,0) is mapped to a set that includes both the PMM point (1,1) and points that are better for the projecting player but no better for the other player.}
\end{figure}

Here, we argue
that, although in part the players' initial bargaining
problem recurs in SPI selection,
players will always renegotiate
so that each attains at least
the worst payoff they can get in any
efficient outcome. Following
\citet{Rabin1991AMO} we call the profile of
these payoffs the
\termemph{Pareto meet minimum (PMM)}.
Player~$i$'s \termemph{Pareto meet projection (PMP)}
(Fig.~\ref{fig:projection})
maps each outcome
to the
set of
Pareto improvements
such that, first,
each player's payoff
is at least the PMM,
and second, the payoff
of $j \neq i$
is not increased
except up to the PMM.
We will prove our bound by 
arguing that if 
players attempt to negotiate 
a Pareto improvement on an outcome,
they always at least weakly 
prefer to be willing to 
negotiate to the PMP of that outcome.

\begin{definition}
Let~$\effic$ be the set of Pareto-efficient
action profiles in~$\game$.
Then the
\textbf{Pareto meet minimum (PMM)}
payoff profile
is $\pmm = (\min_{\fullact \in \effic} \payone(\fullact), \min_{\fullact \in \effic} \paytwo(\fullact))$.
Player~$i$'s \textbf{Pareto meet projection (PMP)}
of an action profile~$\fullact$
is the set
$\yudproji(\fullact)$
of action profiles~$\genfullact$
such that
$\payi(\genfullact) \geq \max\{\pmmi, \payi(\fullact)\}$
and
$\payj(\genfullact) = \max\{\pmmj, \payj(\fullact)\}$.
\end{definition}

We’ll start by giving an informal description of the 
algorithm we will use to prove the
guarantee, called
\termemph{conditional set-valued renegotiation (CSR)}.
Next, we’ll describe different components of the algorithm in more depth.
Finally, we’ll formally present the algorithm
and the guarantee.

\subsection{Overview of 
CSR}
\label{sec:sub:iterated}
If players want to increase
their chances of Pareto-improving
via renegotiation,
without necessarily accepting renegotiation outcomes
that heavily favor their counterpart,
they can report to each other \textit{multiple}
renegotiation
outcomes they each would find acceptable
and take Pareto improvements
on which they agree.
CSR implements such an approach. 
Like Algorithm~\ref{alg:renegotiation},
CSR
involves default programs,
and checks whether the default programs of a profile
of
CSR
algorithms result in an efficient outcome.
If not, CSR moves to a renegotiation procedure that works as follows:
\begin{enumerate}
    \item \textbf{Renegotiation
    using conditional sets.} 
    At this stage, programs “announce” sets of points that Pareto-improve on the default and that they are willing to renegotiate to,
    conditional on the other player's program (see shaded regions in Fig.~\ref{fig:pmm}).
    If these sets overlap, the procedure
    continues to the second step; 
    otherwise the players revert to their 
    defaults.
    
    \ \ \ \ \  The intuition for using
    \textit{sets} at
    this stage is that (we will argue)
    this way a player can use
    a program that is willing to 
    renegotiate to
    a payoff 
    above their PMM payoff,
    without risking
    miscoordination
    if the other player
    does not also choose
    this new payoff
    precisely (see Fig.~\ref{fig:pmm}).
    Renegotiation sets 
    that \textit{condition} on 
    the other player's renegotiation
    set function
    are
    crucial
    to the result
    that players are guaranteed
    their PMM payoff.
    This is 
    because unconditionally 
    adding an outcome to the
    renegotiation set might
    provide Pareto improvements against some possible counterpart program,
    but
    make the outcome 
    worse
    against some \textit{other} possible counterpart program
    (see Example~\ref{ex:cond}).
    \item \textbf{Choosing a point in the agreement set.}
    Call the intersection of the sets players
    announced at the previous stage the “agreement set.''
    At this stage, a ``\selfunction{} function''
    chooses an outcome from the Pareto frontier
    of the agreement set,
    which the players play instead of their
    miscoordinated default outcome.
    (Section~\ref{sec:sub:components}
    discusses how players coordinate on the \selfunction{} function, without needing to solve a further bargaining
    problem.)
\end{enumerate}

\subsection{Components of
CSR}
\label{sec:sub:components}

\subsubsection{Set-valued renegotiation}

To avoid the need
to coordinate on an exact renegotiation function,
players
can
use functions
that map miscoordinated outcomes
to sets of Pareto improvements
they each find acceptable.
(In examples, we'll abuse terminology by referring to action profiles
by their corresponding payoff profiles.)
Then, we suppose
the players follow some rule
(a \termemph{\selfunction{} function})
for choosing an efficient outcome from
their agreement set.

\begin{definition}
Let
${\mathbf{C}}$ $\hspace{-0.8mm} (\allact)$ be the set of
closed
subsets of $\allact$.\footnote{I.e., closed with respect to the topology on $\allact$
induced by the Euclidean distance $d(\fullact,\fullact') = ||\fullpay(\fullact) - \fullpay(\fullact')||$.}
Letting~$\svrenegspacei$
be a set of functions
from~$\svrenegspacej \times \allact$
to $\mathbf{C}(\allact)$,
a function 
$\renegi \in \svrenegspacei$
is a \textbf{set-valued renegotiation
function}
if, 
for all~$\renegj \in \svrenegspacej$:
\begin{enumerate}
    \item 
    For all~$\fullact \in \allact$ and $\fullact' \in \renegi(\renegj, \fullact)$,
    we have
    $\fullpay(\fullact') \succeq \fullpay(\fullact)$. 
    \item
     For some~$\fullact$
    and some~$\fullact' \in \renegi(\renegj, \fullact)$,
    we have
    $\payip(\fullact') > \payip(\fullact)$
    for some~$i'$.
\end{enumerate}

A function 
$\renegb$ $: {\mathbf{C}}$ $\hspace{-0.8mm} (\allact) \to \allact$
is a \textbf{\selfunction{} function}
if
$\renegb({S})$ is Pareto-efficient
among points in~${S}$.\footnote{Because each ${S} \in \mathbf{C}(\allact)$
is closed,
some points in~${S}$
are guaranteed to be Pareto-efficient among points in
${S}$.}
A \selfunction{} function is \textbf{transitive}
if, for all~${S}, {S'}$ such 
that~$\fullpay(\mathbf{x}) \succeq \fullpay(\renegb({S}))$
for all~$\textbf{x} \in {S'}$,
we have~$\fullpay(\renegb({S} \cup {S'})) \succeq \fullpay(\renegb({S}))$.
\end{definition}

One might worry that by assuming a fixed selection function,
we still haven't avoided the need for coordination. However, note that there is no \textit{bargaining} problem involved in coordinating on a selection function. To see this, consider two players who intended to use renegotiation programs with different selection functions. Each player could switch to using a program that used the other player's selection function, and modify their set-valued renegotiation function
so as to guarantee the same outcome as if the other player switched to \textit{their} selection function. (See Appendix~\ref{app:barg} for a formal argument.)
So the players are indifferent as to which selection function is used.
(Coordinating on a selection function is a \textit{pure} coordination problem, however; compare to the problem of coordinating on the programming language used in syntactic comparison-based program equilibrium \citep{tennenholtz2004program}.)
In the results that follow,
we will show that players can guarantee
the PMM no matter which
(transitive)
\selfunction{} function they use.

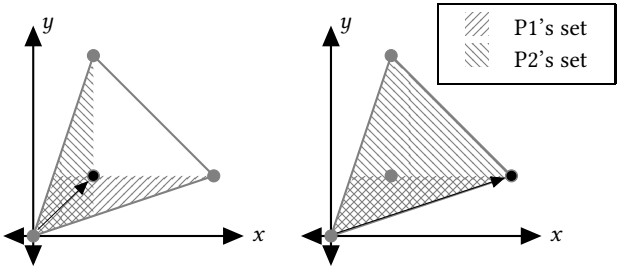
\begin{figure}
    \centering
    \begin{tabular}{cc}
    
\begin{tikzpicture}[scale=1]

\draw[thick,<->] (-0.4, 0) -- (2.8, 0) node[right] {$x$};
\draw[thick,<->] (0, -0.4) -- (0, 2.8) node[right] {$y$};

\draw[gray, fill=gray] (0, 0) circle (0.08);
\draw[gray, fill=gray] (2.4, 0.8) circle (0.08);
\draw[gray, fill=gray] (0.8, 2.4) circle (0.08);
\draw[gray, fill=gray] (0.8, 0.8) circle (0.08);

\draw[gray, thick] (0,0) -- (2.4,0.8) -- (0.8,2.4) -- cycle;

\fill[pattern=north east lines, pattern color=gray] 
  (0,0) -- (0.3,0.8) -- (2.4,0.8) -- cycle;
  
\fill[pattern=north west lines, pattern color=gray] 
(0,0) -- (\pointx, \tritop) -- (\pointx,\tribottom) -- cycle;

\draw[black, fill=black] (\pointx, 0.8) circle (0.06);

 \draw[black,thin,->,line width=0.02pt] (0.06, 0.085) -- (0.72, 0.72);
  
\end{tikzpicture}

&

\begin{tikzpicture}[scale=1]

\draw[thick,<->] (-0.4, 0) -- (2.8, 0) node[right] {$x$};
\draw[thick,<->] (0, -0.4) -- (0, 2.8) node[right] {$y$};

\draw[gray, fill=gray] (0, 0) circle (0.08);
\draw[gray, fill=gray] (2.4, 0.8) circle (0.08);
\draw[gray, fill=gray] (0.8, 2.4) circle (0.08);
\draw[gray, fill=gray] (0.8, 0.8) circle (0.08);

\draw[gray, thick] (0,0) -- (2.4,0.8) -- (0.8,2.4) -- cycle;

\fill[pattern=north east lines, pattern color=gray] 
  (0,0) -- (0.3,0.8) -- (2.4,0.8) -- cycle;
  
\fill[pattern=north west lines, pattern color=gray] 
(0,0) -- (0.8, 2.4) -- (2.4,0.8) -- cycle;

\draw[black, fill=black] (2.4, 0.8) circle (0.06);

 \draw[black,thin,->,line width=0.02pt] (0.08, 0.04) -- (2.3, 0.77);

\node[anchor=north east] at (3.9,3.2) {
\fbox{
\begin{tabular}{ll}
    \tikz\fill[pattern=north east lines,pattern color=gray] (0,0) rectangle (0.3,0.3); & P1's set \\
 \tikz\fill[pattern=north west lines,pattern color=gray] (0,0) rectangle (0.3,0.3); & P2's set \\
\end{tabular}
}
};
  
\end{tikzpicture}

\\
    \end{tabular}
    \caption{
    Set-valued
    renegotiation in the Scheduling Game,
    for two possible player~2 renegotiation sets.
    Black points represent renegotiation outcomes
    (mapped from the miscoordination
    outcome~$(0,0)$).
    If player~1 uses the renegotiation set
    shown here,
    they can achieve a Pareto improvement
    even if players don't reach the Pareto frontier (left),
    while still allowing for their
    best possible outcome (right).
    }
    \label{fig:pmm}
    \Description{This figure shows two plots illustrating set-valued renegotiation in the Scheduling Game, starting from a miscoordination outcome at (0,0). Both plots show Player 1's renegotiation set as a shaded region that includes their most preferred outcome (3,1) and all points that are Pareto-worse than (3,1). The left plot shows a case where Player 2's renegotiation set contains only their most preferred outcome (1,3), resulting in a renegotiation outcome (black point) that is better than (0,0) but not Pareto efficient. The right plot shows a case where Player 2's set also includes (3,1), allowing the players to achieve Player 1's most preferred outcome.}
\end{figure}

\begin{exmp}
\label{ex:svr}
\textbf{(Set-valued renegotiation)}
Suppose players in 
the Scheduling Game (Table~\ref{tab:meeting})
miscoordinate at $\fullact= (0,0)$.
The two plots in Fig.~\ref{fig:pmm} 
illustrate set-valued renegotiation
for
two possible player~2 renegotiation
sets
$\renegtwo(\renegone, \fullact)$,
and a fixed player~1
renegotiation
set
$\renegone(\renegtwo, \fullact)$.
Black points indicate the corresponding renegotiation
outcomes.
Player~1 thinks it's likely that
the only efficient outcome
player~2 is willing to renegotiate to 
is their own most
preferred outcome (1,~3)
(topmost gray point, left
plot).
But player~1 believes
that
with positive probability
player~2's renegotiation set 
will also
include
player~1's most preferred outcome (3,~1)
(black point, right plot).
Player~1's best response given these
beliefs may be to choose a 
set-valued
renegotiation function~$\renegone$
that
maps (0,~0)
to a set
including both~(3,~1) and all outcomes
Pareto-worse
than~(3,~1),
i.e., the set depicted in Fig.~\ref{fig:pmm}.
This way, they
still achieve a Pareto improvement
if
player~2 has the smaller set (left plot),
and get their best payoff
if player~2 has the larger set (right plot).
\end{exmp}

\subsubsection{Conditional renegotiation sets.}
We saw that renegotiation sets allow a player to achieve 
Pareto improvements against a wider variety
of other players than is possible with renegotiation functions.
However, suppose a player could not condition their
renegotiation set on the other player's program.
Then, by adding a point to their renegotiation set
in attempt to Pareto-improve against some possible players,
they might lock themselves out of a better outcome against other possible players.

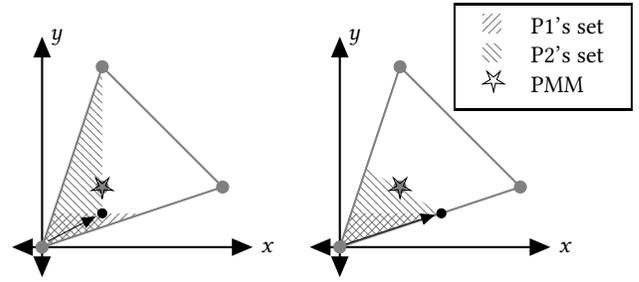
\begin{figure}
    \centering
    \begin{tabular}{cc}

    \begin{tikzpicture}[scale=1]
    \usetikzlibrary {shapes.geometric} 

\draw[thick,<->] (-0.4, 0) -- (2.8, 0) node[right] {$x$};
\draw[thick,<->] (0, -0.4) -- (0, 2.8) node[right] {$y$};

\draw[gray, fill=gray] (0, 0) circle (0.08);
\draw[gray, fill=gray] (2.4, 0.8) circle (0.08);
\draw[gray, fill=gray] (0.8, 2.4) circle (0.08);
\draw[gray, fill=gray] (0.8, 0.8) circle (0.08);

\draw[gray, thick] (0,0) -- (2.4,0.8) -- (0.8,2.4) -- cycle;

\fill[pattern=north east lines, pattern color=gray] 

  (0,0) -- (0.15, 0.45) -- (1.35,0.45) -- cycle;
  
\fill[pattern=north west lines, pattern color=gray]
    (0,0) -- (\pointx, \tritop) -- (\pointx,\tribottom) -- cycle;

    \draw[black, fill=black] (\pointx, 0.45) circle (0.06);

     \draw[black,thin,->,line width=0.02pt] (0.08, 0.08) -- (\pointx-0.1, 0.45-0.05);
   \node [star,
 star point height=.01cm,
 minimum size=0.01cm, 
 star point ratio=0.3,
 draw]
       at (\pointx,\pointx) {};
\end{tikzpicture}

    &

     \begin{tikzpicture}[scale=1]
    \usetikzlibrary {shapes.geometric} 

\draw[thick,<->] (-0.4, 0) -- (2.8, 0) node[right] {$x$};
\draw[thick,<->] (0, -0.4) -- (0, 2.8) node[right] {$y$};

\draw[gray, fill=gray] (0, 0) circle (0.08);
\draw[gray, fill=gray] (2.4, 0.8) circle (0.08);
\draw[gray, fill=gray] (0.8, 2.4) circle (0.08);
\draw[gray, fill=gray] (0.8, 0.8) circle (0.08);

\draw[gray, thick] (0,0) -- (2.4,0.8) -- (0.8,2.4) -- cycle;

\fill[pattern=north east lines, pattern color=gray] 

  (0,0) -- (0.15, 0.45) -- (1.35,0.45) -- cycle;
  
\fill[pattern=north west lines, pattern color=gray]
    (0,0) -- (0.35, 1.05) -- (1.35, 0.45) -- cycle;

    \draw[black, fill=black] (1.35, 0.45) circle (0.06);

     \draw[black,thin,->,line width=0.02pt] (0.08, 0.04) -- (1.35-0.1, 0.45-0.025);
   \node [star,
 star point height=.01cm,
 minimum size=0.01cm, 
 star point ratio=0.3,
 draw]
       at (\pointx,\pointx) {};

\node[anchor=north east] at (4,3.35) {
\fbox{
\begin{tabular}{ll}
    \tikz\fill[pattern=north east lines,pattern color=gray] (0,0) rectangle (0.25,0.25); & P1's 
    set \\
 \tikz\fill[pattern=north west lines,pattern color=gray] (0,0) rectangle (0.25,0.25); & P2's set \\
 \tikz \node [star,
 star point height=.01cm,
 minimum size=0.01cm, 
 star point ratio=0.3,
 draw]
       at (\pointx,\pointx) {}; & PMM \\
\end{tabular}
}
};

\end{tikzpicture}

\\
    \end{tabular}
    \caption{
   Two possible
    renegotiation procedures 
    in the Scheduling Game,
    for different
    player~2 renegotiation sets.
    Player~1 might
    add the PMM (star) to their
    unconditional
    renegotiation set.
    In the case in the left plot, player~1 
    is no worse off by
    adding the PMM
    to their set.
    But in the case in the right plot,
    if player~1 adds
    the PMM,
    they might
    do worse
    if
    the \selfunction{} function
    chooses
    the PMM instead
    of the
    black point
    that would have otherwise been
    achieved.
    }
    \label{fig:conditional}
    \Description{This figure shows two plots of renegotiation procedures in the Scheduling Game with different Player 2 renegotiation sets, illustrating why players need conditional renegotiation sets. A star marks the PMM outcome at (1,1). In the left plot, Player 1's unconditional renegotiation set includes both their original outcome (black point) and the PMM, and they are no worse off when the PMM is chosen. In the right plot, Player 1's unconditional set again includes both outcomes, but here they would do worse if the selection function chooses the PMM instead of the black point that would have been achieved with their original set.}
\end{figure}

\begin{exmp}
\label{ex:cond}
\textbf{(Failure of unconditional renegotiation sets)} 
    Suppose that
    in the Scheduling Game,
    Player~1 uses an
    unconditional
    set-valued renegotiation function~$\renegone$.
    Fig.~\ref{fig:conditional}
    shows
    their set~$\renegone(\renegtwo, \fullact)$
    for the miscoordination
    outcome~$\fullact = (0,~0)$.
    Suppose player~1 
    instead considers
    using~${\renegone}'$
    such that for all~$\renegtwo$,
    their renegotiation set is
    ${\renegone}'(\renegtwo, \fullact) = \renegone(\renegtwo, \fullact) \cup \{\pmm\}$. 
    For both player~2 renegotiation sets
    shown in the figure,
    the players renegotiate to 
    (i.e., the selection function chooses) the PMM (star).
    Then,
    player~1 is better off than under the default renegotiation outcome (black point)
    in the case in the left plot,
    but worse off
    in the case in the right plot. 
    But if player~1 had access
    to conditional
    renegotiation sets,
    they could instead use an~${\renegone}'$
    that
    includes the PMM
    \textit{only}
    against the player~2
    set in the left plot.
\end{exmp}

\subsubsection{Renegotiation
sets that guarantee
the PMM}
How can a player guarantee
a payoff better
than some miscoordination outcome,
without losing the opportunity
to bargain for their most-preferred outcome?
Suppose
player~$i$ considers
using a set-valued renegotiation
function that doesn't
guarantee the PMM.
That is,
against some counterpart 
program,
the resulting outcome~$\fullact$ is worse for at least one player
than their 
least-preferred
efficient outcome (i.e., their PMM payoff). Then:
\begin{enumerate}
    \item As we will argue in Theorem~\ref{prop:yudpoint},
    under mild assumptions,
    player~$i$
    is no worse off also including 
    $\yudproji(\fullact)$ in their 
    renegotiation set. So, if player~$j$ also follows the same incentive to include $\yudprojj(\fullact)$ in their 
    renegotiation set, 
    these programs will guarantee at least the PMM.
    (Notice
    that players' ability
    to guarantee
    the PMM depends
    on conditional renegotiation sets, for
    the reasons discussed in
    Example~\ref{ex:cond}.)
    \item On the other hand, if~$i$ thinks
    the \selfunction{}
    function
    might not choose their optimal outcome in the agreement set,~$i$
    will \textit{not} prefer to include outcomes
    strictly better for player~$j$ than those in $\yudproji(\fullact)$.
    For
    example,
    in
    Fig.~\ref{fig:pmm},
    $\renegone(\renegtwo, (0,0))$ includes all outcomes Pareto-worse than~$\yudprojone((0,0))$.
    This set
    safely
    guarantees
    the PMM against a player who
    also uses a set of this form,
    and gives player~1 their best possible outcome
    against the $\renegtwo$ in the right plot.
    But if  $\renegone(\renegtwo, (0,0))$ in the right plot included additional outcomes,
    which would be worse for player~1 than player~1's best possible outcome,
    the \selfunction{} function might choose
    an outcome
    that is worse for player~1
    than otherwise.
    (This is why, when we construct strategies
    for
    the proof of Theorem~\ref{prop:yudpoint},
    we add the entire PMP even though it is sufficient to only
    add
    the point in the PMP that minimizes the player's payoff.)
\end{enumerate}






Finally, here 
is the formal definition of CSR
programs. For a
set-valued
renegotiation
function~$\renegi$,
we define
the space of
CSR
programs
$\fsspaceireni$
as the
space of programs
with the structure
of
Algorithm~\ref{alg:csr},
for some default~$\defprogi$.
(Let~$\fullsvspace$ be the set of all
set-valued renegotiation 
functions,
and for each~$i$
let the space of all
CSR
programs be~$\fullfsspacei = \bigcup_{\renegi \in \fullsvspace} \fsspaceireni$.
 Let $\allprogreneg = \fullfsspaceone \times \fullfsspacetwo$.)
The \selfunction{} function~$\renegb$
is given to
the players
(and we suppress dependence of~$\fullfsspacei$ on~$\renegb$ for simplicity).

\begin{algorithm}
\caption{
Conditional set-valued
renegotiation
program $\progi \in \fsspaceireni$, for some $\defprogi$}
\label{alg:csr}
\begin{algorithmic}[1]
\Require Counterpart program $\progj$
\If{$\progj \in \fsspacej(\renegj)$ for some $\renegj \in \fullsvspace$}
    \State 
    $\deffullact \leftarrow \fullact(\deffullprog)$ 
        \State ${I} \leftarrow \renegone(\renegtwo, \deffullact) \cap \renegtwo(\renegone, \deffullact)$
        \Comment{Agreement set}
         \If{${I} \neq \emptyset$}
             \State $\deffullact \leftarrow \renegb({I})$
             \Comment{Renegotiation outcome}
    \EndIf
    \State \Return $\defacti$
    \Comment{Play renegotiation outcome, or default}
\Else
     \State \Return $\defprogi(\progj)$  
\EndIf
\end{algorithmic}
\end{algorithm}

\subsection{Guaranteeing PMM Payoffs Using
CSR}
\label{sec:sub:pmm}


Similar to the
assumption
in Section~\ref{sec:implement},
suppose
players always include
more outcomes in their renegotiation
sets if their expected
utility is unchanged.
So in particular, to show that in subjective equilibrium players use programs
that guarantee at least the PMM,
it will suffice to show that
they weakly prefer these programs.

Then, 
we will show in
Theorem~\ref{prop:yudpoint}
that
under
mild assumptions on players' beliefs,
for any program
that
does not guarantee
a player at least
their PMM payoff,
there is a corresponding CSR
program the player prefers
that \textit{does}
guarantee their PMM
payoff.
We prove this
result
by constructing programs identical
to the programs
players would otherwise use,
except that these new programs'
renegotiation sets
for each outcome
include their PMP
of the
outcome they would have otherwise achieved.
For a program $\progi$, we call this 
modified program the 
\termemph{PMP-extension} of~$\progi$ 
(Definition \ref{def:pmp-extension}).

This result requires two
assumptions
on players' beliefs
and the structure of programs
used in subjective equilibrium
(Assumptions~\ref{assum:nopunish}\ref{subassum:csrwlog} and~\ref{assum:nopunish}\ref{subassum:extension}),
analogous to 
Assumption~\ref{assum:rennopun}
of Proposition~\ref{prop:simultaneous}:
\begin{enumerate}
    \item Assumption~\ref{assum:nopunish}\ref{subassum:csrwlog}
is equivalent to
Assumption~\ref{assum:rennopun}
applied to CSR programs
rather than renegotiation programs:\ For any program
used in subjective equilibrium
or in the support of a player's
beliefs,
if that program never renegotiates,
it responds identically
to counterpart CSR programs
as to their defaults.
\item Informally, Assumption \ref{assum:nopunish}\ref{subassum:extension} 
says that players believe that, with probability~1:
 If a CSR program
is modified only
by adding PMP points
to its renegotiation set,
the only changes the counterparts
would
prefer to make are those that   
also add PMP points.
The intuition for this assumption is:\ For any possible default renegotiation outcome, the PMP-extension, by definition, doesn't add any points that make the counterpart strictly
better off than that outcome while making the focal player worse off (see Fig.~\ref{fig:pmp}).
So, similar
to Assumption~\ref{assum:nopunish}\ref{subassum:csrwlog}, the counterpart doesn't have an incentive
to make changes to their renegotiation set
that would make the focal player worse off.
(This argument wouldn't work if player~$i$ also added
outcomes that are \textit{better} for~$j$
than their PMP-extension. This is because, as noted in the previous section,~$j$ would then have an incentive to
exclude ~$i$'s most-preferred outcome from~$j$'s renegotiation set.)
\end{enumerate}

For Theorem~\ref{prop:yudpoint}
we also assume the \selfunction{} function is transitive.
This is an intuitive property:\ If outcomes are added to the agreement
set that make \textit{all} players weakly better off
than the default renegotiation outcome,
the new renegotiation outcome should be weakly better for all players.

\par The remainder of this subsection provides the formal 
details for the statement of Theorem \ref{prop:yudpoint}, 
and a sketch of the proof.

\multilinecomment{
\begin{figure}
    \centering

  \begin{tabular}{cc}

    \begin{tikzpicture}[scale=1]
    \usetikzlibrary {shapes.geometric} 

\draw[thick,<->] (-0.4, 0) -- (2.8, 0) node[right] {$x$};
\draw[thick,<->] (0, -0.4) -- (0, 2.8) node[right] {$y$};

\draw[gray, fill=gray] (0, 0) circle (0.08);
\draw[gray, fill=gray] (2.4, 0.8) circle (0.08);
\draw[gray, fill=gray] (0.8, 2.4) circle (0.08);
\draw[gray, fill=gray] (0.8, 0.8) circle (0.08);

\draw[gray, thick] (0,0) -- (2.4,0.8) -- (0.8,2.4) -- cycle;

\draw[thick,dashed,-] (0.8, 0.8) -- (2.4, 0.8);

\fill[pattern=north east lines, pattern color=gray] 

  (0,0) -- (0.15, 0.45) -- (1.35,0.45) -- cycle;
  
\fill[pattern=north west lines, pattern color=gray]
    (0,0) -- (\pointx, \tritop) -- (\pointx,\tribottom) -- cycle;

   \draw[black, fill=black] (\pointx, 0.8) circle (0.06);

\end{tikzpicture}

    &

        \begin{tikzpicture}[scale=1]
    \usetikzlibrary {shapes.geometric} 

\draw[thick,<->] (-0.4, 0) -- (2.8, 0) node[right] {$x$};
\draw[thick,<->] (0, -0.4) -- (0, 2.8) node[right] {$y$};

\draw[gray, fill=gray] (0, 0) circle (0.08);
\draw[gray, fill=gray] (2.4, 0.8) circle (0.08);
\draw[gray, fill=gray] (0.8, 2.4) circle (0.08);
\draw[gray, fill=gray] (0.8, 0.8) circle (0.08);

\draw[gray, thick] (0,0) -- (2.4,0.8) -- (0.8,2.4) -- cycle;

\draw[thick,dashed,-] (0.8, 0.8) -- (2.4, 0.8);

\fill[pattern=north east lines, pattern color=gray] 

  (0,0) -- (0.15, 0.45) -- (1.35,0.45) -- cycle;
  
\fill[pattern=north west lines, pattern color=gray]
    (0,0) -- (\pointx-0.2, \tritop-0.55) -- (\pointx-0.2,\tribottom-0.08) -- cycle;

   \draw[black, fill=black] (\pointx-0.2, 0.45) circle (0.06);


\node[anchor=north east] at (4,3.5) {
\fbox{
\begin{tabular}{ll}
    \tikz\fill[pattern=north east lines,pattern color=gray] (0,0) rectangle (0.25,0.25); & P1's 
    set \\
 \tikz\fill[pattern=north west lines,pattern color=gray] (0,0) rectangle (0.25,0.25); & P2's set \\
 \tikz \node [star,
 star point height=.01cm,
 minimum size=0.01cm, 
 star point ratio=0.3,
 draw]
       at (\pointx,\pointx) {}; & PMM \\
\end{tabular}
}
};

\end{tikzpicture}

\\
    \end{tabular}
    
    \caption{Caption}
    \label{fig:enter-label}
\end{figure}
}


\begin{definition}\label{def:pmp-extension}
For any~$\progi \in \fsspaceireni$ for some $\renegi$, the \textbf{PMP-extension}~$\widetilde{\progi} \in \fsspaceinewreni$
    is the program
    identical to~$\progi$
    except:\
    for all $\progj \in \fsspacej(\renegj)$ for some $\renegj$,
    writing~$\newfullprogi = (\newprogi, \progj)$,
    we have
    \begin{align*}
        \newrenegi(\renegj, \fullact(\defnewfullprogi)) &= \renegi(\renegj, \fullact(\defnewfullprogi)) \cup \yudproji(\fullact(\fullprog)).
    \end{align*}
\end{definition}

\begin{assumption}
     We say that 
     players 
     with beliefs $\fullprior$ 
     \textbf{are (i) certain that CSR won't be punished and (ii) certain that PMP-extension won't be punished} 
     if the following hold:
     \begin{enumerate}[(i)]
      \item Suppose either $\progi$ is in a subjective equilibrium
     of~$\game(\allprog \cup \allprogreneg)$,
     or $\progi$ is in the support of $\priorji$.
      Suppose $\progi \notin \fsspacei$. 
         Then for any $\progj \in  \fsspacej$,
         we have 
         $\progi(\progj) = \progi(\defprogj)$.
         \label{subassum:csrwlog}
         \item Let~$\progj \in \fsspacej(\renegj)$ be in the
     support of~$\priorij$,
     and take any~$\progi \in \fsspaceireni$
     with PMP-extension $\newprogi$.
         For all~$\fullact$, we have that
     $\renegj(\newrenegi,\fullact) = \renegj(\renegi,\fullact) \cup 
     {V}$ for some ${V} \subseteq \yudproji(\fullact(\fullprog))$. \label{subassum:extension}
     \end{enumerate}
     
      \label{assum:nopunish}
\end{assumption}  
\begin{figure}
    \centering
     \begin{tikzpicture}[scale=1]
    \usetikzlibrary {shapes.geometric} 

\draw[thick,<->] (-0.4, 0) -- (2.8, 0) node[right] {$x$};
\draw[thick,<->] (0, -0.4) -- (0, 2.8) node[right] {$y$};

\draw[gray, fill=gray] (0, 0) circle (0.08);
\draw[gray, fill=gray] (2.4, 0.8) circle (0.08);
\draw[gray, fill=gray] (0.8, 2.4) circle (0.08);
\draw[gray, fill=gray] (0.8, 0.8) circle (0.08);
\draw[gray, thick] (0,0) -- (2.4,0.8) -- (0.8,2.4) -- cycle;

\fill[pattern=north east lines, pattern color=gray] 
(0,0) -- (0.4, 1.2) -- (0.4, 0.6) -- (1.8,0.6) -- cycle;
  
  
\fill[pattern=north west lines, pattern color=gray]
   (0,0) -- (0.8, 2.4) -- (0.8,0.3) -- cycle;

\draw[black,fill=black] (-0.08, -0.08) rectangle (0.08, 0.08);

 \draw[black, fill=black] (0.4, 1.2) circle (0.06);

\draw[thick,dashed,-] (0.8, 1.2) -- (2, 1.2);

\node[anchor=north east] at (7,3) {
\fbox{
\begin{tabular}{ll}
 \tikz\node[rectangle,draw,fill=black] at (0.3,0.3) {}; & $\fullact(\deffullprog)$ \\
    \tikz\fill[pattern=north east lines,pattern color=gray] (0,0) rectangle (0.3,0.3); & $\renegone(\renegtwo, \fullact(\deffullprog))$ \\
 \tikz\fill[pattern=north west lines,pattern color=gray] (0,0) rectangle (0.3,0.3); & $\renegtwo(\renegone, \fullact(\deffullprog))$ \\
    \tikz\node[circle,draw,fill=black,scale=0.6] at (0,0) {}; & $\fullact(\fullprog)$ \\ 
 \tikz[baseline]{\draw[dashed] (0,0.1) -- (0.35,0.1);}
 & $\yudprojone(\fullact(\fullprog))$ \\
\end{tabular}
}
};

  
\end{tikzpicture}
    \caption{Illustration of the argument for Theorem~\ref{prop:yudpoint}.
    By default, the renegotiation outcome is the black circle, $\fullact(\fullprog)$.
    Player~1
    considers whether to add
    to their renegotiation
    set $\renegone(\renegtwo, \fullact(\deffullprog))$
    the black striped segment $\yudprojone(\fullact(\fullprog))$.
    Player~1
    is certain that player~2
    would not change their set $\renegtwo(\renegone, \fullact(\deffullprog))$ 
    in response to this addition
    in a way that would make player~1
    worse off
    (Assumption~\ref{assum:nopunish}).
    This is because the only change player~1
    has made is to add outcomes that make both
    players weakly better off than~$\fullact(\fullprog)$
    and do not make player~2 strictly better off.
    }
    \label{fig:pmp}
    \Description{The default renegotiation outcome a(p) is marked with a black circle. A black striped segment shows Player 1's PMP of this outcome, PMP₁(a(p)), which includes points that make both players weakly better off than a(p) while not making Player 2 strictly better off. The shaded region represents Player 1's renegotiation set RN¹(RN², a(p^def)).}
\end{figure}
\begin{restatable}{theorem}{pmmguarantee}
\label{prop:yudpoint}
Let~$\game(\allprog)$ be a program game,
and~$\renegb$ be any transitive \selfunction{} function.
Suppose
the action sets of~$\game$
are continuous, so
that for any~$\fullact \in \allact$,
player~$i$'s PMP of that action profile
$\yudproji(\fullact)$
is nonempty.
Let~$\fullprior$ be any belief profile
satisfying the assumption 
that players are (i) certain that CSR won't be punished and (ii) certain that
PMP-extension won't be punished (Assumption~\ref{assum:nopunish}).

 Then,
 for any subjective equilibrium~$(\fullprog, \fullprior)$ of $\game(\allprog \cup \allprogreneg)$
 where $\progpayi(\fullprog) < \pmmi$ 
 for some~$i$,
 there exists~$\fullprog'$
 such that:
    \begin{enumerate}
        \item  For all~$i$, $\progi'$ is
     the PMP-extension of~$\progi$.
    \item $\fullprogpay(\fullprog') \succeq \pmm$.
    \item $(\fullprog', \fullprior)$
      is a subjective equilibrium
      of  $\game(\allprog \cup \allprogreneg)$.
    \end{enumerate}
\end{restatable}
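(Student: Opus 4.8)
The plan is to set each $\progi'$ equal to the PMP-extension $\newprogi$ of $\progi$, so that part~1 holds by definition; the content is in parts~3 and~2. Since the PMP-extension is defined only for CSR programs, I would first argue---by the CSR analogue of Proposition~\ref{prop:simultaneous}, which invokes Assumption~\ref{assum:nopunish}\ref{subassum:csrwlog}---that we may take $\progi\in\fsspaceireni$ for each~$i$, so that $\newprogi$ is well defined. Two facts will be used throughout: every point of $\yudproji(\fullact)$ is $\succeq\pmm$ (it has $i$'s coordinate at least $\pmmi$ and $j$'s at least $\pmmj$), so all PMP points are $\succeq\pmm$; and because any renegotiation outcome Pareto-improves on the default, the outcome $\fullact(\fullprog)$ a pair of CSR programs reaches satisfies $\fullpay(\fullact(\fullprog))\succeq\fullpay(\deffullact)$, which makes the added PMP points legitimate Pareto improvements on the default and hence $\newprogi$ a valid CSR program.

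For part~3 I would establish that $\newprogi$ is again a best response to $\priorij$. Because $\progi$ attains $\max_{\progi}\mathbb{E}_{\progj\sim\priorij}\progpayi(\fullprog)$ and $\newprogi$ is itself a feasible program, it suffices to show $\progpayi(\newprogi,\progj)\ge\progpayi(\progi,\progj)$ for every $\progj$ in the support of $\priorij$; this forces $\newprogi\in\argmax$, and the tie-breaking convention (prefer the larger renegotiation set) then selects it. I would split on the counterpart. If $\progj$ is not a CSR program, Assumption~\ref{assum:nopunish}\ref{subassum:csrwlog} gives $\progj(\newprogi)=\progj(\defprogi)=\progj(\progi)$ while $\newprogi$ reverts to $\defprogi$, so the payoff is unchanged. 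If $\progj$ is CSR but does not renegotiate with $\progi$, then after extension the pair either still reverts (same payoff) or newly renegotiates (a Pareto improvement on the default). If $\progj$ does renegotiate with $\progi$, then by Assumption~\ref{assum:nopunish}\ref{subassum:extension} its set responds to $\newrenegi$ only by adding points of $\yudproji(\fullact(\fullprog))$; every point that either player adds weakly dominates the old renegotiation outcome $\fullact(\fullprog)=\renegb(I)$, so transitivity of $\renegb$ gives $\renegb(I')\succeq\renegb(I)$. In each case $i$ is weakly better off.

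For part~2 I would examine the realized play of $\progone'$ against $\progtwo'$. Let $I$ be the agreement set of the original pair and $I'$ that of the extended pair; the additions $I'\setminus I$ lie in the two players' PMPs, so each is $\succeq\pmm$ and, being $\succeq\fullact(\fullprog)$, exceeds $\fullact(\fullprog)$ strictly in whichever coordinate is below its PMM value. The heart of the argument is that the two PMPs share a point $z\succeq\pmm$ lying in $I'$ (justified below); granting this, $\fullact(\fullprog)$ is dominated by $z$ and so is not Pareto-efficient in $I'$. Transitivity then gives $\renegb(I')\succeq\fullact(\fullprog)$, and since $\fullact(\fullprog)=\renegb(I)$ is efficient in $I$---so no other point of $I$ is $\succeq\fullact(\fullprog)$---the point $\renegb(I')$ cannot lie in $I$ and must be an added PMP point. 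Hence $\fullprogpay(\fullprog')=\fullpay(\renegb(I'))\succeq\pmm$.

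The main obstacle is the circularity buried in part~2. Because the PMP-extension conditions on the counterpart's renegotiation function, when $\progone'$ meets $\progtwo'$ each player adds the PMP not of $\fullact(\fullprog)$ but of the outcome against the \emph{already-extended} counterpart---that is, of $\fullact(\progone,\progtwo')$ and of $\fullact(\progone',\progtwo)$. I would first show, by the same transitivity/Assumption~\ref{assum:nopunish}\ref{subassum:extension} argument used in part~3, that both of these outcomes are themselves $\succeq\fullact(\fullprog)$, so the PMPs actually added still floor at $\pmm$ and still dominate $\fullact(\fullprog)$, preserving every step above. The genuinely delicate point is then to show that these two (possibly shifted) PMPs still share a point $z$ inside $I'$, which is exactly what certifies that $\fullact(\fullprog)$ is dominated in $I'$; this is where the geometry of the Pareto meet and the no-punishment assumption must be combined carefully. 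The clean special case in which $\progone,\progtwo$ fail to renegotiate (so $\fullact(\fullprog)=\deffullact$ and $I=\emptyset$) sidesteps the difficulty, since then $I'$ is contained in the union of the two PMPs and every point of $I'$ is already $\succeq\pmm$; the general case reduces to this via the efficiency-in-$I$ observation above.
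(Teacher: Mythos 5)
Your treatment of parts~1 and~3 tracks the paper's proof closely: part~1 holds by construction, the reduction to CSR programs uses Assumption~\ref{assum:nopunish}\ref{subassum:csrwlog} (the paper does this directly, by wrapping any non-CSR program as the default of a CSR program whose renegotiation set is always empty, rather than citing an analogue of Proposition~\ref{prop:simultaneous}, but the substance is identical), and your counterpart-by-counterpart case analysis --- non-CSR counterpart, CSR counterpart with no overlap, CSR counterpart with overlap, handled via Assumption~\ref{assum:nopunish}\ref{subassum:extension} plus transitivity of~$\renegb$ --- is exactly the paper's argument for why the PMP-extension is weakly preferred.

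The gap is in part~2, and it is precisely the step you flag as unresolved. Your route --- show that $\renegb(I')$ must be one of the added PMP points, via Pareto-efficiency of $\renegb(I)$ within $I$ --- is never completed: the claim that the two players' (differently anchored) PMPs share a point lying in $I'$ is deferred as ``the genuinely delicate point,'' and ``the general case reduces to this'' is not an argument. (There is also a small slip: Pareto efficiency of $\renegb(I)$ in $I$ only excludes points of $I$ that dominate it with some strict inequality, not all points $\succeq$ it; this is repairable but symptomatic.) The paper closes this step with a different, self-referential argument that never compares $I$ with $I'$ at all: assume for contradiction that $\payip(\fullact(\fullprog')) < \pmmip$ for some~$i'$. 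By the definition of PMP-extension, each player~$i$'s renegotiation set in the profile~$\fullprog'$ contains $\yudproji(\fullact(\fullprog'))$, the PMP anchored at the \emph{realized} outcome itself; by continuity of the action sets, these PMPs have a common point (the Pareto-meet point of $\fullact(\fullprog')$), which therefore lies in the agreement set and strictly Pareto-dominates $\fullact(\fullprog')$ (strictly in coordinate~$i'$). This contradicts the fact that $\renegb$ returns a Pareto-efficient element of the agreement set, and that contradiction is the entirety of part~2 in the paper. The anchoring subtlety you identified is real --- the extension of $\progi$ anchors its added PMP at the outcome of the unextended $\progi$ against the extended counterpart, and the paper's own write-up of this final step is terse about it --- but the paper's resolution is this fixed-point-style contradiction centered on $\fullact(\fullprog')$, which is exactly the ingredient your proposal is missing; without it, the PMM bound is not proved.
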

\begin{sketch}
Assumption~\ref{assum:nopunish}\ref{subassum:csrwlog}
implies that players always use CSR programs.
Consider
any renegotiation outcome~$\fullact$
worse for some player than the PMM,
which is achieved
by player~$i$'s ``old'' program
against some counterpart.
By Assumption~\ref{assum:nopunish}\ref{subassum:extension}, player~$j$ doesn't
punish~$i$ for adding
their PMP
of that outcome,
$\yudproji(\fullact)$,
to their
renegotiation set
(in their ``new'' program).
So the renegotiation outcome
of the new program
against~$j$
is only different
from that of the old
program
if~$j$ is also willing
to renegotiate
to some outcome in~$\yudproji(\fullact)$.
But in that case,
because the \selfunction{} function
is transitive,
the new renegotiation outcome
is no worse for~$i$
than under the old program.
Therefore,
each player always prefers to
replace a given program
with its PMP-extension,
and when all players
use PMP-extended programs,
the Pareto frontier of
their agreement set only includes
outcomes guaranteeing
each player
their PMM payoff.
\end{sketch}

\textit{Remark:} Notice that the argument above does not require that players refrain from using programs that implement other kinds of SPIs, besides PMP-extensions.
First, the PMP-extension can be constructed from any default program, including, e.g., a CSR program whose renegotiation set is only 
extended to include
the player’s best outcome,
not their PMP (call this a ``self-favoring extension'').
And if a player's final choice of program is their self-favoring extension, they are still incentivized to use the PMP-extension within their default program.

Second, while it is true that an analogous argument to the proof of Theorem~\ref{prop:yudpoint} could show that a player is weakly better off \textit{ex ante} using a self-favoring extension than
not extending their renegotiation set at all,
this does not undermine our argument.
This is because, as we claimed at the start of this section, it is reasonable to assume that among programs with equal expected utility, each player prefers to also include their PMP.
But wouldn't the player also prefer an even larger renegotiation set
that includes 
outcomes that Pareto-dominate the PMM as well?
No, because those outcomes will be worse for that player \textit{and}
better for their counterpart than the player’s most-preferred outcome,
such that the counterpart would have an incentive
to make the player worse off
(i.e., it's plausible
that Assumption~\ref{assum:nopunish}\ref{subassum:extension} would be violated).

\

We can now formalize the claim that
CSR
is an SPI
that partially solves
SPI selection:\ The mapping
from programs~$\fullprog$
to instances of 
Algorithm~\ref{alg:csr}
with$\fullprog$ as  
defaults,
for
\textit{any profile}~$(\renegone, \renegtwo)$
used in subjective equilibrium
under the assumptions of
Theorem~\ref{prop:yudpoint},
is an SPI that guarantees players
their PMM payoffs.

\begin{proposition}
\label{prop:itspi}
For $i=1,2$,
for some \selfunction{} function~$\renegb$,
define 
$\fsprogi^{\renegi}: \progspacei \rightarrow \fsspaceireni$ such that,
for each $\progi \in \progspacei$, $\fsprogi^{\renegi}(\progi)$
is of the form given in Algorithm~\ref{alg:csr} with 
$\defprog{\fsprogi^{\renegi}(\progi)} = \progi$.
Then, 
under the assumptions of 
Theorem~\ref{prop:yudpoint},
for any~$(\renegone, \renegtwo), \deffullprog$
such that
for all~$\renegj$,
$\yudproji(\fullact(\fullprog)) \subseteq \renegi(\renegj, \fullact(\fullprog))$:
\begin{enumerate}
    \item The function
    $\fullfs^{\reneg}: \fullprog \mapsto (\fsprogone^{\renegone}(\progone), \fsprogtwo^{\renegtwo}(\progtwo))$ is an SPI.
    \item For all~$i$,
    $\progpayi(\fullfs^{\reneg}(\deffullprog)) \geq \max\{\progpayi(\deffullprog), \pmmi\}$.
\end{enumerate}
\end{proposition}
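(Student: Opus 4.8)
The plan is to verify both claims essentially from the definitions, treating Part~1 as a direct analogue of Proposition~\ref{prop:spi} and reserving the real work for the PMM lower bound in Part~2. For \textbf{Part~1}, I would establish the universal (weak) Pareto-improvement condition of Definition~\ref{def:progSPI} first. For an arbitrary default profile $\fullprog$, the two Algorithm~\ref{alg:csr} programs comprising $\fullfs^{\reneg}(\fullprog)$ each recognize the other as a CSR program, share the common default outcome $\fullact(\fullprog)$, and form the agreement set $I = \renegone(\renegtwo, \fullact(\fullprog)) \cap \renegtwo(\renegone, \fullact(\fullprog))$. If $I = \emptyset$ they revert to their defaults and reproduce $\fullact(\fullprog)$; if $I \neq \emptyset$ they play $\renegb(I) \in I$, and the first defining property of a set-valued renegotiation function (Pareto-improvement) gives $\fullpay(\fullact') \succeq \fullpay(\fullact(\fullprog))$ for every $\fullact'$ in either renegotiation set, hence for $\renegb(I)$. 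Either way $\fullprogpay(\fullfs^{\reneg}(\fullprog)) \succeq \fullprogpay(\fullprog)$. The strict condition is routine, just as in Proposition~\ref{prop:spi}: at any profile whose default outcome is inefficient, the agreement set contains a point strictly dominating it (e.g.\ the point $v$ below), and since $\renegb(I)$ is Pareto-efficient in $I$ it cannot have the same payoff as $\fullact(\fullprog)$, so some player is made strictly better off.

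For \textbf{Part~2}, I would apply the transformation to the given defaults, so $\deffullprog$ is the shared default and $\fullact = \fullact(\deffullprog)$ the default outcome. The bound $\progpayi(\fullfs^{\reneg}(\deffullprog)) \geq \progpayi(\deffullprog)$ is immediate from the Pareto-dominance just established. For $\progpayi(\fullfs^{\reneg}(\deffullprog)) \geq \pmmi$, the first step is to locate a good point of the agreement set. A direct computation shows that $\yudprojone(\fullact) \cap \yudprojtwo(\fullact)$ is the single payoff point $v$ with $\fullpay(v) = (\max\{\pmmone, \payone(\fullact)\},\ \max\{\pmmtwo, \paytwo(\fullact)\}) \succeq \pmm$, since player~1's PMP fixes player~2's payoff at $\max\{\pmmtwo, \paytwo(\fullact)\}$ while player~2's PMP fixes player~1's payoff at $\max\{\pmmone, \payone(\fullact)\}$. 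The PMP-containment hypothesis $\yudproji(\fullact) \subseteq \renegi(\renegj, \fullact)$ then forces $v \in I$, so $I \neq \emptyset$ and renegotiation occurs.

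The remaining, and main, obstacle is to rule out that $\renegb(I)$ falls below $\pmm$ in some coordinate: this does \emph{not} follow from PMP-containment, which only lower-bounds the sets, since a priori $I$ could contain a Pareto-efficient point conceding one player below their PMM payoff in exchange for the other exceeding $v$, and a fixed $\renegb$ might select it. The resolution is to import the complementary upper-bounding structure that Theorem~\ref{prop:yudpoint} supplies for renegotiation functions used in subjective equilibrium under Assumption~\ref{assum:nopunish}: its no-punishment conditions together with transitivity of $\renegb$ ensure (as argued in that proof and the remark following it) that a player never concedes to the counterpart beyond their own most-preferred renegotiation point, i.e.\ player~$i$'s set lies in $\{\,\payj(\cdot) \leq \max\{\pmmj, \payj(\fullact)\}\,\}$. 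Writing $a_i = \max\{\pmmi, \payi(\fullact)\}$, intersecting the two caps gives $I \subseteq \{\payone(\cdot) \leq a_1,\ \paytwo(\cdot) \leq a_2\}$; since $v = (a_1, a_2) \in I$ by the previous step, $v$ weakly dominates every point of $I$ and is thus its unique Pareto-efficient point, forcing $\renegb(I) = v$. Hence $\progpayi(\fullfs^{\reneg}(\deffullprog)) = \max\{\pmmi, \payi(\fullact)\} \geq \max\{\progpayi(\deffullprog), \pmmi\}$, as claimed. I expect transferring this cap from Theorem~\ref{prop:yudpoint} to the selected point to be the crux; everything else is bookkeeping with the CSR definitions.
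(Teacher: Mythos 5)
Your Part 1 is essentially right and matches the paper's (very terse) treatment: weak improvement follows from the structure of Algorithm~\ref{alg:csr} plus the defining property of set-valued renegotiation functions, exactly as in Proposition~\ref{prop:spi}. The genuine gap is in Part 2. Having (correctly, under your reading) observed that PMP-containment plus Pareto-efficiency of $\renegb(I)$ in $I$ does not by itself rule out the selected point falling below $\pmm$ in one coordinate, you patch the argument by claiming that Theorem~\ref{prop:yudpoint}'s assumptions cap each player's set inside $\{\payj(\cdot) \leq \max\{\pmmj, \payj(\fullact)\}\}$. No such cap is available. Assumption~\ref{assum:nopunish} constrains only how a counterpart \emph{responds} to a PMP-extension (any change must consist of points of the extender's PMP); it places no upper bound on what the renegotiation sets themselves contain, and the remark following Theorem~\ref{prop:yudpoint} explicitly contemplates players whose sets contain other points (e.g.\ ``self-favoring extensions''), with only an informal incentive argument---not a hypothesis of the proposition---suggesting players would avoid overly generous sets. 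Your final step ($v$ dominates all of $I$, hence $\renegb(I) = v$ exactly) therefore rests on structure the statement does not supply; it also proves something stronger (exact payoffs $\max\{\pmmi, \payi(\fullact)\}$) than claim 2, which is a sign the route is wrong.

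The paper's proof is ``follows immediately from the argument used to prove Theorem~\ref{prop:yudpoint},'' and the relevant argument is the final paragraph of that proof, which needs no cap because the containment hypothesis is \emph{self-referential}: it is stated at $\fullact(\fullprog)$---the outcome of the \emph{transformed} (CSR) profile itself---not at the default outcome $\fullact(\deffullprog)$. So the agreement set $I$ contains action profiles whose payoff is $v^* = \left(\max\{\pmmone, \payone(\fullact(\fullprog))\}, \max\{\pmmtwo, \paytwo(\fullact(\fullprog))\}\right)$, the intersection of the two players' PMPs of the \emph{achieved} outcome (nonempty by the continuity assumption). In particular $I \neq \emptyset$, renegotiation occurs, and $\fullact(\fullprog) = \renegb(I)$ is Pareto-efficient within $I$. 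Now if some player $i'$ had $\payip(\fullact(\fullprog)) < \pmmip$, then $v^*$ would weakly dominate $\fullpay(\fullact(\fullprog))$ in every coordinate and strictly dominate it in coordinate $i'$, while lying in $I$---contradicting Pareto efficiency of the selected point. Hence $\fullprogpay(\fullfs^{\reneg}(\deffullprog)) \succeq \pmm$, and the $\progpayi(\fullfs^{\reneg}(\deffullprog)) \geq \progpayi(\deffullprog)$ half follows, as you note, because every point of the renegotiation sets Pareto-improves on the default outcome. The missing idea, then, is to apply the PMP containment to the renegotiated outcome itself and invoke efficiency of the selection, rather than to import an upper bound on the sets; neither the cap nor transitivity of $\renegb$ is needed for this step.
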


\begin{proof}
This follows
immediately
from the argument used to prove 
Theorem~\ref{prop:yudpoint}.
\end{proof}

In
Appendix~\ref{app:proof:ineff},
we
show that 
players are \textit{not}
always incentivized
to use SPIs
that strictly
improve on the PMM. 

\begin{table}[ht]
	\caption{Key notation}
	\label{tab:notation}
	\begin{tabular}{rl}\toprule
		\textit{Symbol} & \textit{Description (page introduced)} \\ \midrule
         $\fullact(\fullprog)$ & action profile in the base game played by players \\
         & with the given program profile (2) \\
         $\smrenegi$ & renegotiation function for player~$i$ (maps an action \\
         & profile to a Pareto-improved action profile) (3) \\
		  $\renegi$ & set-valued renegotiation function for player~$i$  \\ & (maps $j$'s set-valued renegotiation function and  \\
    & an action profile to a set of Pareto-improved  \\
    & action profiles) (5) \\
    $\fullrenegspace, \fullsvspace$ & sets of all renegotiation functions and set-valued \\
    & renegotiation functions, respectively (3, 6) \\
        $\onefsspacei(\smrenegi)$ & set of renegotiation programs (Algorithm~\ref{alg:renegotiation}) that \\
        & use the renegotiation function~$\smrenegi$ (3)\\
        $\fsspaceireni$ & set of conditional set-valued renegotiation \\
        & programs (Algorithm~\ref{alg:csr}) that use the set-valued \\
        &  renegotiation function~$\renegi$ (6)\\
        $\defprogi$ & default program for a program~$\progi$ in $\onefsspacei$ or $ \fullfsspacei$ (3)\\
        $\renegb$ & \selfunction{} function (maps a set of action profiles to \\ &an action profile that is efficient within that set) (5) \\
        $\pmm $ & Pareto meet minimum (5) \\
        $\yudproji$ & Pareto meet projection for player~$i$ (maps an  \\
        & action profile to a particular set of Pareto-improved
         \\
        & action profiles) (5) \\ \bottomrule
	\end{tabular}
\end{table}

\section{Discussion}
\label{sec:disc}
Using renegotiation to construct 
SPIs in program games
is a rich and novel
area, 
with many directions to explore. 
To name a few:
\begin{itemize}
    \item Which plausible conditions would violate our assumptions about
    players' beliefs used for the PMM guarantee?
    \item What do \textit{unilateral} 
    SPIs \citep{oesterheld2021safe} look like in this setting?
    \item When are 
    SPIs used in sequential, rather than 
    simultaneous-move, settings? 
    In particular, in sequential settings, 
    the first-moving player's decision whether to use a renegotiation program 
    could
    signal private information to the second-moving player.
    \item We have assumed complete information about payoffs;
    using ideas from \citet{DiGiovanni2022Apr}'s framework 
    for program games in the presence of private information, it should 
    also 
    be possible to construct SPIs in incomplete information settings.
    \item How can 
    this theory inform real-world AI system design?
\end{itemize}

\begin{acks}
    For valuable comments and discussions, we thank our anonymous referees, Caspar Oesterheld, Lukas Finnveden, Vojtech Kovarik, and Alex Kastner.
\end{acks}

\balance

\bibliography{main}

\newpage
\onecolumn

\appendix

\section{${n}$-Player Notation and Proof of Proposition \ref{prop:simultaneous}}

\setcounter{theorem}{1}

We extend the 2-player formalism in the main text to the ${n}$-player case as follows
(all other extensions from the 2-player case to the ${n}$-player case are straightforward):
\begin{itemize}
    \item In a program game with program space $\allprog$, each player~$i$'s beliefs~$\priorij$ are a distribution supported on $\bigtimes_{j \neq i} \progspacej$.
    \item A profile of programs, renegotiation functions, etc. with the subscript $-i$ denotes the profile with the $i$th entry excluded. E.g., $\progmi = (\progj)_{j \neq i}$. 
    \item $\yudproji(\fullact)$
    is the set of action profiles~$\genfullact$
    such that
    $\payi(\genfullact) \geq \max\{\pmmi, \payi(\fullact)\}$
    and
    $\payj(\genfullact) = \max\{\pmmj, \payj(\fullact)\}$ for all $j \neq i$. 
    \item For a profile $\fullprog \in \bigtimes_{j=1}^{n} \fsspacej(\renegj)$, let $\renegmi = (\renegj)_{j \neq i}$.
\end{itemize}

The statement of Assumption~\ref{assum:rennopun} in the general ${n}$-player case is:

\setcounter{definition}{3}

\begin{assumption}
We say that players with beliefs~$\fullprior$
\textbf{are certain 
that renegotiation won't be punished}
if the following holds. For any program profile~$\fullprog$, define~$\defprogmji$ as~$\progmj$ 
    with~$\progi$ replaced by~$\defprogi$.
    Take any renegotiation function $\smreneg \in \fullrenegspace$;
    any renegotiation program
    $\progi \in \onefsspacei(\smreneg)$;
    and any~$\progmi$ in the support of~$\priorij$
    such that the programs in $\fullprog$ don't renegotiate with each other.
    (I.e.,
    there is no $(\smrenegj)_{j \neq i}$
    such that for all $j \neq i$
    we have
    $\progj \in \onefsspacej(\smrenegj)$
    where $\smrenegj(\fullact(\deffullprog)) = \smreneg(\fullact(\deffullprog))$.)
    Then:
    \begin{enumerate}
        \item For all~$j \neq i$, we have $\progj(\progmj) = \progj(\defprogmji)$.
        \item If~$\defprogi$ is used in subjective equilibrium with respect
        to~$\priorij$,
        and
        $\progmi \in \bigtimes_{j \neq i} \onefsspacej(\smrenegj)$ for some~$(\smrenegj)_{j \neq i}$,
        we have~$\defprogi(\progmi) = \defprogi(\defprogmi)$.
    \end{enumerate}
\end{assumption}

In this context, let $\allprogrn = \bigtimes_{i=1}^{{n}} \onefsspacei$. Then:

\spirational*

\begin{proof}
Let~$(\fulleq, \fullprior)$
be a subjective equilibrium
of $\game(\allprog \cup \allprogrn)$,
where for some~$i$,
$\eqprogi \notin \onefsspacei$;
and
let~$\smreneg$
be an arbitrary
renegotiation function.
Let $\progmi$ be in the support of~$\priorij$.
We will
show
that
against any such $\progmi$,
the
new program $\irfsprogi(\eqprogi) \in \onefsspacei(\smreneg)$
is
weakly
better for player~$i$
than~$\eqprogi$
(hence $\irfsprogi(\eqprogi)$ is better in expectation).

\begin{itemize}
    \item Suppose there is some~$j$ such
    that~$\progj \notin \onefsspacej$.
    Then,
    the SPI-transformed program
    doesn't renegotiate,
    so $\irfsprogi(\eqprogi)(\progmi) = \eqprogi(\progmi)$,
    and hence by Assumption~\ref{assum:rennopun}
    we have, for all $j \neq i$,
    $\progj(\progmj) = \progj(\defeqprogmji)$.
    (Where~$\defprog{\irfsprogi(\eqprogi)} = \eqprogi$.)
    So $\progpayi((\irfsprogi(\eqprogi), \progmi)) = \progpayi((\eqprogi, \progmi))$.
    \item Otherwise, $\progmi \in \bigtimes_{j \neq i} \onefsspacej(\smrenegj)$
    for some renegotiation functions $\smrenegj$. Let $\deffullact = \fullact((\eqprogi, \defprogmi))$. 
    \begin{itemize}
        \item 
         If $\smreneg(\deffullact) \neq \smrenegj(\deffullact)$ for some~$j$,
         then none of the programs renegotiate,
         so 
         $\irfsprogi(\eqprogi)(\progmi) = \eqprogi(\defprogmi)$
         and for all~$j \neq i$ we have $\progj(\progmj) = \defprogj(\defprogmj)$.
         By Assumption~\ref{assum:rennopun},
         since~$\eqprogi = \defprog{\irfsprogi(\eqprogi)}$ is played in a subjective equilibrium,
         $\eqprogi(\progmi) = \eqprogi(\defprogmi)$
         (while~$\progj(\eqprogi) = \defprogj(\eqprogi)$ for all~$j \neq i$).
         Therefore 
         $\progpayi((\irfsprogi(\eqprogi), \progmi)) = \progpayi((\eqprogi, \defprogmi)) = \progpayi((\eqprogi, \progmi))$.
        \item If $\smreneg(\deffullact) = \smrenegj(\deffullact)$ for all~$j \neq i$,
        then because~$\smreneg$ and all $\smrenegj$
        are renegotiation functions,
        $(\irfsprogi(\eqprogi), \progmi)$ Pareto-improves on $(\eqprogi, \defprogmi)$,
        so by the above we have 
        $\progpayi((\irfsprogi(\eqprogi), \progmi)) \geq \progpayi((\eqprogi, \progmi))$.
    \end{itemize}
\end{itemize}
Thus, for all~$\progmi$ we have
$\progpayi((\irfsprogi(\eqprogi), \progmi)) \geq \progpayi((\eqprogi, \progmi))$.

Now, let $\fullprog'$ be the program profile such that:
\begin{itemize}
    \item $\progj' = \eqprogj$ if $\eqprogj \in \onefsspacej(\smrenegj)$ for some $\smrenegj$
    \item $\progj' = \irfsprogj(\eqprogj)$ otherwise.
\end{itemize}

Clearly, then, (1) holds by construction, and each $\progj' \in \onefsspacej$. 
And since the above argument holds for any~$j$,
we have
$$\irfsprogj(\eqprogj) \in \argmax_{\progj \in \progspacej \cup \onefsspacej}
\mathbb{E}_{\progmj \sim \priorji} \progpayj(\fullprog),$$
i.e., (2) $(\fullprog', \fullprior)$ is a subjective equilibrium.
\end{proof}

\section{Proof of Claim that Coordination on the Selection Function Does Not Change Players' Payoffs}
\label{app:barg}

Let $\csprog((\renegi)\indices, \renegb, \deffullprog)$ denote the program profile such that each player~$i$'s program is given by Algorithm~\ref{alg:csr}
for the set-valued renegotiation function $\renegi$
and default program $\defprogi$, given the \selfunction{} function~$\renegb$.
Let~$\mathcal{S}$ be the set of \selfunction{} functions.
We claim there exists a mapping~$\genfunc : \bigtimes\indices \svrenegspacei \times \mathcal{S} \to \bigtimes\indices \svrenegspacei$,
returning new set-valued renegotiation functions,
such that the same outcome is induced by the new set-valued renegotiation functions and \textit{any} selection function
as is induced by the old set-valued renegotiation functions and old selection function.
The intuition is that
to the extent
there is a bargaining problem
over selection
functions,
this can be ``translated'' into the players'
choice of 
renegotiation functions.

Formally:\ For any~$\renegb, \renegb' \in \mathcal{S}$ and any~$(\renegi)\indices \in \bigtimes\indices \svrenegspacei$, we have
$\fullact(\csprog(\genfunc((\renegi)\indices, \renegb), \renegb', \deffullprog)) = \fullact(\csprog((\renegi)\indices, \renegb, \deffullprog))$.

To see this, let~$\deffullact = \fullact(\deffullprog)$ and for each~$i$, let~${\renegi}'({\renegmi}', \deffullact) = \{\renegb(\bigcap_{j=1}^n \renegj(\renegmj, \deffullact))\}$.
And let~$\genfunc((\renegi)\indices, \renegb) = ({\renegi}')\indices$.
Then, since $\bigcap_{j=1}^n {\renegj}'({\renegmj}', \deffullact)$
is a singleton, for any~$\renegb'$ we have
$\renegb'(\bigcap_{j=1}^n {\renegj}'({\renegmj}', \deffullact)) = \renegb(\bigcap_{j=1}^n {\renegj}({\renegmj}, \deffullact))$.
So, as required,
$\fullact(\csprog(\genfunc((\renegi)\indices, \renegb), \renegb', \deffullprog)) = \fullact(\csprog((\renegi)\indices, \renegb, \deffullprog))$.

The reason this result implies
that players do not face a
bargaining problem
when coordinating on
the selection function is as follows.
Consider a variant of a program
game in which:
\begin{itemize}
    \item Instead of choosing just one CSR
    program for a specific selection function,
    each player
    independently
    chooses a ``meta''-CSR program --- which,
    for each possible selection function,
    specifies a CSR
    program (using the same default program)
    for that selection function.
    \item At the same time as the meta-CSR programs are submitted,
    one player~$i$ (chosen arbitrarily)
    chooses a selection function~$\renegb_i$.
    \item Each other player~$j$, in sequence, reports a selection function~$\renegb_j$.
    \item If the players agree on the same selection function,
    the players' corresponding
    CSR programs are played. Otherwise, their default programs are used.
\end{itemize}
Recall
that the claim we proved above
is:\ all possible outcomes
of CSR program profiles can be attained under
all possible selection functions,
via players varying their renegotiation functions
across the programs in their
meta-CSR program.
Given this claim,
it is reasonable to assume that players 
expect the same distribution
of outcomes
for each possible 
agreed-upon
selection function.
So the players $j \neq i$
have no reason not to accept the selection function
chosen by player~$i$.



\section{Proof of Theorem~\ref{prop:yudpoint} }

Denote 
the \textbf{agreement set} as 
     $\intsect(\renegi, \renegmi, \fullact) = \bigcap_{j=1}^{n} \renegj(\renegmj, \fullact)$.

In the ${n}$-player setting, define:

\setcounter{definition}{6}
     \begin{definition}
For any~$\progi \in \fsspaceireni$, the \textbf{PMP-extension}~$\widetilde{\progi} \in \fsspaceinewreni$
    is the program
    identical to~$\progi$
    except:\
    for all $\progmi$,
    writing~$\newfullprogi = (\newprogi, \progmi)$,
    we have
    \begin{align*}
        \newrenegi(\renegmi, \fullact(\defnewfullprogi)) &= \renegi(\renegmi, \fullact(\defnewfullprogi)) \cup \yudproji(\fullact(\fullprog)).
    \end{align*}
    Let~$\newrenegimj$ denote~$\renegmj$ with~$\renegi$ replaced by~$\newrenegi$. 
\end{definition}

\begin{assumption}
     We say that 
     players 
     with beliefs $\fullprior$ 
     \textbf{are (i) certain that CSR won't be punished and (ii) certain that PMP-extension won't be punished} 
     if the following hold:
     \begin{enumerate}[(i)]
      \item Suppose either $\progi$ is in a subjective equilibrium
     of~$\game(\allprog \cup \allprogreneg)$,
     or, for some player $j$, $\progi$ is an element of some $\progmj$ in the support of $\priorji$.
      Suppose $\progi \notin \fsspacei$. 
         Then for any $\progmi \in \bigtimes_{j \neq i} \fsspacej$,
         we have 
         $\progi(\defprogmi) = \progi(\progmi)$.
         \label{subassum:csrwlog}
         \item Let~$\progmi \in \bigtimes_{j \neq i} \fsspacej(\renegj)$ be in the
     support of~$\priorij$,
     and take any~$\progi \in \fsspaceireni$
     with PMP-extension $\newprogi$.
         For all~$j \neq i$ and all~$\fullact$, we have
     $\renegj(\newrenegimj,\fullact) = \renegj(\renegmj,\fullact) \cup 
     {V}$ for some ${V} \subseteq \yudproji(\fullact(\fullprog))$. \label{subassum:extension}
     \end{enumerate}
     
      \label{assum:nopunish}
\end{assumption}

\pmmguarantee*

\begin{proof}
Let~$(\fullprog, \fullprior)$
be a subjective equilibrium
of $\game(\allprog \cup \bigtimes_{i=1}^{{n}} \fullfsspacei)$.
First, we can assume each $\progi$ is in $\fsspacei$.
To see this, take any $\progi' \notin \fsspacei$,
and define a corresponding
CSR program $\progi'' \in \fsspacei({\renegi}'')$ by:
\begin{itemize}
    \item $\defprog{\progi''} = \progi'$, and
    \item For all ${\renegmi}, \fullact$, we have ${\renegi}''({\renegmi}, \fullact) = \emptyset$.
\end{itemize}
Then, consider any $\progmi$. If $\progmi \notin \bigtimes_{j \neq i} \fsspacej(\renegj)$ for any $(\renegj)_{j\neq i} \in \bigtimes_{j \neq i} \fullsvspace$, then
the renegotiation procedure doesn't
occur,
i.e., $\progi''(\progmi) = \progi'(\progmi)$.
Otherwise, since $\progi''$ always 
returns an empty renegotiation set,
$\progi''(\progmi) = \progi'(\defprog{\progmi})$. But by
Assumption~\ref{assum:nopunish}\ref{subassum:csrwlog},
we have $\progi'(\defprog{\progmi}) = \progi'(\progmi)$.
Thus $\progi''$ has the same outputs as $\progi'$, for all input programs, as required.

Given this, 
let
$\progi \in \fsspaceireni$
for some $\renegi$.
Define~$\newprogi \in \fsspaceinewreni$
by:
\begin{itemize}
    \item 
    $\defprog{\newprogi} = \defprogi$, and
    \item
    For all $\progmi$,
    we have
    $$\newrenegi(\renegmi, \fullact(\defnewfullprogi)) = \renegi(\renegmi, \fullact(\defnewfullprogi)) \cup 
    \yudproji(\fullact(\fullprog)).
    $$
\end{itemize}
We will show
that this new program
is at least as subjectively good for player~$i$
as the original $\progi$.
Consider a~$\progmi$ in the support of~$\priorij$. By the same argument as above,
we can assume $\progmi \in \bigtimes_{j \neq i} \fsspacej(\renegj)$,
for some set-valued renegotiation functions~$(\renegj)_{j \neq i}$:

\begin{itemize}
        \item \textbf{PMP of default renegotiation outcome
        not in a counterpart's set:}
         By construction
    $\fullact(\defnewfullprogi) = \fullact(\deffullprog)$.
        If for some $j \neq i$, player~$i$'s PMP
    of the
    default
    renegotiation outcome
    isn't in~$j$'s
    renegotiation set
    (in response to
    player~$i$'s modified renegotiation set),
    then 
    adding that projection
    to~$i$'s
    set doesn't make a difference.
    That is,
    let
    ${S}^{\yudproji} = \yudproji(\fullact(\fullprog)) \cap \bigcap_{j \neq i} \renegj(\newrenegimj, \fullact(\deffullprog))$.
    If~${S}^{\yudproji} = \emptyset$,
    then by Assumption~\ref{assum:nopunish}\ref{subassum:extension}, $\intsect(\newrenegi, \renegmi, \fullact(\defnewfullprogi)) = \intsect(\renegi, \renegmi, \fullact(\deffullprog))$,
    so~$\fullprogpay(\newfullprogi) = \fullprogpay(\fullprog)$.
        \item 
        If~${S}^{\yudproji} \neq \emptyset$:
            \begin{itemize}
            \item
            By definition, ${S}^{\yudproji} \subseteq \yudproji(\fullact(\fullprog))$.
            So, by Assumption~\ref{assum:nopunish}\ref{subassum:extension},\footnote{Remark:\ Notice that this could be relaxed to: ``Let~$\progmi \in \bigtimes_{j \neq i} \fsspacej(\renegj)$ be in the
     support of~$\priorij$,
     and let~$\progi \in \fsspaceireni$
     be in a subjective equilibrium
     of~$\game(\allprog \cup \bigtimes_{i=1}^{{n}} \fullfsspacei)$
     with PMP-extension $\newprogi$.
         For all~$j \neq i$ and all~$\fullact$, we have
     $\renegj(\newrenegimj,\fullact) = \renegj(\renegmj,\fullact) \cup 
     {V}$ for some ${V} \subseteq \yudproji(\fullact(\fullprog))$.'' We state the technically stronger assumption in the main text for simplicity.}
            the only change to the agreement
            set due to the PMP-extension
            is that outcomes from
            player~$i$'s PMP of the default
            renegotiation outcome
            are added:\
            $\intsect(\newrenegi, \renegmi, \fullact(\deffullprog)) = \intsect(\renegi, \renegmi, \fullact(\deffullprog)) \cup {S}^{\yudproji}$,
            where this set is nonempty.
            (Thus the new renegotiation outcome
            is~$\fullact(\newfullprogi) = \renegb(\intsect(\newrenegi, \renegmi, \fullact(\deffullprog))) $.)

           \item Lastly, we consider the new
            renegotiation outcome
            given the two cases for 
            the default
            renegotiation outcome
            $\fullact(\fullprog)$:
            \begin{enumerate}
                 \item \textbf{Agreement
                 achieved without PMP:} Suppose~$i$'s original program
                 reached agreement
                 with the other players,
                 that is, we have $\fullact(\fullprog) = \renegb(\intsect(\renegi, \renegmi, \fullact(\deffullprog)))$.
                For all $\textbf{x} \in \yudproji(\fullact(\fullprog))$, the following holds by the
                definition
                of PMP:\ $\fullpay(\textbf{x}) \succeq \fullpay(\fullact(\fullprog))$.
                Since~$\renegb$ is a transitive
                \selfunction{} function,
                therefore,
                $$\fullpay(\fullact(\newfullprogi)) \succeq 
                \fullpay(\fullact(\fullprog)).
                $$
                \item \textbf{No agreement without PMP:} Otherwise, since by construction we have $\fullact(\newfullprogi) \in \yudproji(\fullact(\fullprog))$,
                then
                $$
                \fullpay(\fullact(\newfullprogi)) \succeq 
                \fullpay(\fullact(\fullprog)).
                $$
            \end{enumerate}
            \item Thus in either
            of the two cases,
            $\progpayi(\newfullprogi) \geq \progpayi(\fullprog)$.
        \end{itemize}
    \end{itemize}



Thus, for all~$\progmi$ we have
$\progpayi(\newfullprogi) \geq \progpayi(\fullprog)$.
Applying the same
argument for each player~$j$,
and with~$\newfullprog = (\newprogi)_{i=1}^{{n}}$,
it follows that~$(\newfullprog, \fullprior)$
is a subjective equilibrium.
Now, assume this profile does \textit{not} guarantee the PMM,
that is, 
for some~$i'$, 
$\payip(\fullact(\newfullprog)) < \pmmip$.
Let~$\newrenegmi = (\newrenegj)_{j \neq i}$.
Since~$\yudproji(\fullact(\newfullprog)) \subseteq \newrenegi(\newrenegmi, \fullact(\defprog{\newfullprog}))$
for all~$i$,
it follows from the argument above that the players'
renegotiation sets have nonempty
intersection,
and that~$\renegb(\intsect(\newrenegi, \newrenegmi, \fullact(\defprog{\newfullprog}))) \in \bigcap_{i=1}^{{n}} \yudproji(\fullact(\newfullprog))$ guarantees
each player at least~$\pmmi$.
This contradicts the assumption that
$\newfullprog$ doesn't guarantee
the PMM,
so $\fullprogpay(\newfullprog) \succeq \pmm$ as required.
\end{proof}

\section{Iterated CSR and Tightness of the PMM Bound}
\label{app:proof:ineff}

In the main
text,
we considered renegotiation that takes place in one round.
We might expect, however, that
if 
players
renegotiate
for \textit{multiple}
rounds
indefinitely,
and they are required
to take a strict Pareto
improvement
at every round of renegotiation,
they are guaranteed
payoffs that nontrivially
exceed
the PMM.
As we will
show,
this is not always true.

Consider \termemph{iterated
CSR (ICSR)} programs, constructed as follows.
An ICSR program
works
by repeating the procedure
executed by a CSR program
for~${K}$ rounds,
using the renegotiation outcome
of the previous round
as the default outcome for the next
round.
Formally:\
Consider a tuple~$\fullrenegi$ $= ({\reneg}^{i,(k)})$ $_{k=1}^{{K}} \in \bigtimes_{k=1}^{{K}} \fullsvspace$.
For such a tuple~$\fullrenegi$,
we define
the space of
ICSR
programs
$\itspaceireni$
as the
space of programs
with the structure
of
Algorithm~\ref{alg:icsr},
for some default~$\defprogi$.
(For each $i$, 
let~$\fullitspacei = \bigcup_{(\fullrenegi)} \itspaceireni$,
i.e.,
the space of all
ICSR
programs.)
Let the outcome of the $k$th round of
renegotiation from point~$\fullact$ using ICSR programs~$\fullprog$, if any, be 
\begin{align*}
\bargdefault^{(k)}(\fullprog, \fullact) &= 
     \begin{cases}
         \renegb(\intsect(\renegki, \renegkim, \fullact)),& \text{if }\intsect(\renegki, \renegkim, \fullact) \neq \emptyset, \\
         \fullact,& \text{else.}
    \end{cases}
\end{align*}

\setcounter{theorem}{4}

Then, Proposition~\ref{proposition:inefficiency}
shows that
our PMM payoff bound is tight
in bargaining problems,
that is, games
where no more than one player can achieve
their best feasible payoff.
The intuition for this result is given
in Example~\ref{ex:iterated}.


\begin{algorithm}
\caption{
Iterated
conditional
set-valued
renegotiation
program $\progi \in \itspaceireni$, for some $\defprogi$}
\label{alg:icsr}
\begin{algorithmic}[1]
\Require Counterpart program profile $\progmi$
\If{
$\progmi \in \bigtimes_{j \neq i} \itspacej(\fullrenegj)$ for some $(\fullrenegj)$ $_{j\neq i} \in \bigtimes_{j \neq i}  \bigtimes_{{k}=1}^{K} \fullsvspace$}
    \If{${k} = 1$}
            \State 
             $\deffullact \leftarrow \fullact(\deffullprog)$ 
        \EndIf
    \For{${k} \in \{1,\dots,{K}\}$} \Comment{Renegotiation rounds}
        \State ${I} \leftarrow \bigcap_{j = 1}^{{n}} \renegkj(\renegkjm, \deffullact)$
        \Comment{Agreement set}
         \If{${I} \neq \emptyset$}
    \label{algline:firstagree}
             \State $\deffullact \leftarrow \renegb({I})$
             \Comment{Update renegotiation outcome}
    \EndIf
    \EndFor
    \State \Return $\defacti$
    \Comment{Play final renegotiation outcome, or default}
\Else
     \State \Return $\defprogi(\progmi)$  
\EndIf
\end{algorithmic}
\end{algorithm}

\begin{exmp}
\label{ex:iterated}
\textbf{(Failure to improve significantly 
on the PMM despite iterated renegotiation.)} 
Consider a two-player game.
Suppose that each player~$i$
believes the other will
in round~${k}$
use an
unconditional
renegotiation set of the form, 
``Accept any Pareto improvements
that give both of us at least our PMM payoff, 
and gives my counterparts at most~$\payi^{(k)}$,''
for some increasing sequence of upper bounds~$\{\payi^{(k)}\}_{{k}=1}^{{K}}$.
Intuitively, each upper bound is an ``offer'' of some amount on
the Pareto frontier.
In particular,~$i$ believes
that most likely~$j$'s set in the final renegotiation
round will accept any outcomes
that leave~$i$ just slightly worse off
than in~$i$'s most-preferred outcome;
otherwise, they will accept all outcomes.

Then, if player~$i$
must make some strictly Pareto-improving offer each round,
their
best response
is to make an offer each round
small enough
that at the end of renegotiation,~$j$
will offer slightly less than~$i$'s
most-preferred outcome.
But, if player~$j$ has the same beliefs about~$i$,
the players only slightly improve
upon the PMM.
\end{exmp}

\begin{restatable}
{proposition}
{tight}
\label{proposition:inefficiency}
Write $\bestparetopayoffi$ for
player $i$'s best feasible
payoff. Take a 
non-zero-sum game $G$
with a
feasible set that is continuous, 
contains the Pareto meet, and 
such that for every 
feasible $\fullpay$ with 
$\payi = \bestparetopayoffi$ for some~$i$, 
$\payj < \bestparetopayoffj$ for each 
$j \neq i$. 
Let~$\game(\allprog)$
be a program
game,
and~${K}$
be any natural number.
Then for all $\Delta \succ \mathbf{0}$ 
there exists a subjective equilibrium~$(\fulleq, \fullprior)$
of~$\game(\allprog \cup \bigtimes_{i=1}^{{n}} \fullitspacei)$
(where each~$\eqprogi \in \itspaceireni$
for some~$\fullrenegi$) satisfying the assumptions of Theorem \ref{prop:yudpoint}
in which 
\begin{enumerate}
    \item
    Let~$\deffullact^{(1)} = \fullact(\defprog{\fulleq})$,
    and for~${k} \in \kmlist$,
    let~$\deffullact^{(k+1)} = \bargdefault^{(k)}(\fullprog, \deffullact^{(k)})$.
    Then, for
    every~${k} \in \kmlist$,
    $\deffullact^{(k+1)}$
    strictly Pareto-improves
    on~$\deffullact^{(k)}$;
    \item $\fullprogpay(\fulleq) \preceq
    \pmm + \Delta$.
\end{enumerate}
\end{restatable}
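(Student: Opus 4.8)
The plan is to exhibit a subjective equilibrium whose induced play climbs, over the $K$ rounds, through a chain of strictly Pareto-improving but \emph{inefficient} outcomes that never escapes the box $[\pmm, \pmm + \Delta]$. The engine of the construction is exactly the mechanism sketched in Example~\ref{ex:iterated}: each player holds \emph{over-optimistic} beliefs about how much their counterparts will concede, and best-responds by conceding as little as possible each round; because these optimistic beliefs are mutually inconsistent (no feasible outcome gives two players their ideal payoffs, by the bargaining hypothesis on $\game$), the realized outcomes are driven down near the PMM.

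First I would fix $\Delta \succ \mathbf{0}$ and use the hypotheses on the feasible set --- that it is continuous and contains the Pareto meet --- to choose a strictly increasing chain of feasible payoff profiles $w^{(1)} \prec w^{(2)} \prec \dots \prec w^{(K+1)}$ with $w^{(K+1)} \preceq \pmm + \Delta$, together with action profiles $\fullact^{(k)}$ realizing each $w^{(k)}$ and with $\fullact^{(1)}$ the outcome of the players' default programs. Since the chain is strictly increasing, requirement~(1) holds by construction, and $w^{(K+1)} \preceq \pmm + \Delta$ gives requirement~(2). Here the bargaining hypothesis $\payi = \bestparetopayoffi \Rightarrow \payj < \bestparetopayoffj$ is used to guarantee $\pmmi < \bestparetopayoffi$ for each $i$ (at any $j$'s ideal efficient outcome, $\payi < \bestparetopayoffi$), so that there is genuine room between the PMM and each player's ideal for the over-optimistic beliefs below to be well-defined.

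Next I would specify, for each player $i$, an ICSR program $\eqprogi \in \itspaceireni$ whose round-$k$ set-valued renegotiation function maps the current default $\fullact^{(k)}$ to the (unconditional) set of Pareto improvements on $\fullact^{(k)}$ that cap each counterpart $j$'s payoff at $w^{(k)}_j$. The intersection of these sets is then the box $\{\fullact' : \fullpay(\fullact^{(k)}) \preceq \fullpay(\fullact') \preceq w^{(k+1)}\}$, whose unique Pareto-maximal point is $\fullact^{(k+1)}$; hence any transitive \selfunction{} function $\renegb$ returns $\fullact^{(k+1)}$, producing exactly the desired chain. For the beliefs $\priorij$ I would place most mass on counterpart programs that, in each round, cap player $i$'s \emph{own} payoff only just below $\bestparetopayoffi$ --- i.e.\ that appear far more generous than the counterparts actually are --- with the remaining mass on fully generous programs. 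Under such beliefs, conceding the minimum $w^{(k)}_j$ to each $j$ is a best response: offering $j$ more would merely enlarge the agreement set and let $\renegb$ select a point worse for $i$ (the mechanism of the remark after Theorem~\ref{prop:yudpoint}), while demanding beyond the believed cap would empty the agreement set and force reversion to the inferior default. I would check that this reasoning is robust across all $K$ rounds, so that $\eqprogi$ lies in the argmax and $(\fulleq, \fullprior)$ is a subjective equilibrium.

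The main obstacle is this best-response verification, together with the requirement that the beliefs \emph{also} satisfy Assumption~\ref{assum:nopunish} so that the equilibrium satisfies the hypotheses of Theorem~\ref{prop:yudpoint}. The delicate points are: (a) pinning the realized outcome at the corner $\fullact^{(k+1)}$ for an \emph{arbitrary} transitive \selfunction{} function, which is what forces the sets to be stingy in precisely the way that drives play toward the PMM; (b) the multi-round deviation argument, since a player could in principle stall or manipulate intermediate defaults, and one must show this cannot beat the binding final-round cap; and (c) arranging the belief support to consist of always-renegotiating programs whose renegotiation sets already contain the relevant points of $\yudproji(\cdot)$ (which sit at $\payj = \pmmj$, hence are compatible with the stingy caps $w^{(k)}_j \geq \pmmj$), so that Assumption~\ref{assum:nopunish} holds essentially vacuously. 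Once these are in place, Theorem~\ref{prop:yudpoint} supplies the matching lower bound $\fullprogpay(\fulleq) \succeq \pmm$, so the equilibrium pins the payoff profile into $[\pmm, \pmm + \Delta]$ and tightness follows.
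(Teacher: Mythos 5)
Your proposal follows the same architecture as the paper's proof: stingy equilibrium programs whose caps on counterparts climb an increasing chain ending near $\pmm + \Delta$ (giving requirement (1) and (2) when they meet each other), over-optimistic beliefs consisting of a high-probability ``almost fully generous'' counterpart type and a low-probability ``fully generous'' type, a best-response argument saying that demanding more than the believed generosity level is punished while conceding more cannot help, and a verification of Assumption~\ref{assum:nopunish} via believed programs whose renegotiation sets are unconditional. The invocation of Theorem~\ref{prop:yudpoint} for the matching lower bound is extra but harmless.

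There is, however, a genuine gap at exactly the point where the paper does its real technical work. You set the believed generosity cap at ``just below $\bestparetopayoffi$'' \emph{independently} of player $i$'s own stingy caps $w_j^{(k)}$. The paper instead matches the two: it chooses $\epsdeltaone$ so that $(\bestparetopayoffone-\epsdeltaone, \pmmtwo+\delta_2,\dots)$ is \emph{Pareto efficient}, i.e.\ the believed cap on player $1$ is the frontier value sitting directly above player $1$'s own caps on the others. This matching is what pins the renegotiation outcome against the believed counterparts for an \emph{arbitrary} transitive $\renegb$: the matched corner is feasible and dominates every other point of the agreement set, so it is the unique Pareto-efficient point and $\renegb$ must return it. With your unmatched caps the corner (just below $\bestparetopayoffi$, $w_{-i}^{(k)}$) is generically \emph{infeasible} (near player $i$'s best payoff the frontier gives counterparts payoffs close to $\pmmj$, below $w_j^{(k)}$), so the agreement set against the believed types has a nondegenerate efficient frontier --- an arc running from the matched corner up to your cap --- and the fixed $\renegb$ may legitimately select its worst point for player $i$. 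Then your program is not a best response: a \emph{stingier} program, capping counterparts at exactly $\pmmj$, collapses that arc to a single point giving player $i$ strictly more under such a selection function; and if maximal stinginess is what is optimal, mutual play jumps to $\pmm$ in one round and requirement (1) fails for $K>1$. Relatedly, your argument that conceding more ``would merely enlarge the agreement set and let $\renegb$ select a point worse for $i$'' has the quantifier backwards: to certify an argmax you need that no deviation does strictly \emph{better} under the fixed $\renegb$, and the possibility of doing worse establishes nothing. The paper obtains the needed upper bound from the matched caps themselves --- against the almost-generous type \emph{every} program is capped at $\bestparetopayoffone-\epsdeltaone$, the proposed program attains exactly that bound, and taking the fully-generous probability $\beta$ small closes the comparison. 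Your construction can be repaired by making the same matched choice of the believed cap, but as stated the equilibrium claim does not go through.
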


\begin{proof}
Because $G$ is non-zero-sum 
and has a continuous feasible set, 
we have that $\pmm + (\delta_1, \dots, \delta_n)$ is
feasible for sufficiently small $\delta_0$ 
and $\delta_i \in (0, \delta_0)$ for each $i$.
By the assumptions that
for every 
feasible $\fullpay$ with 
$\payi = \bestparetopayoffi$ for some $i$, 
$\payj < \bestparetopayoffj$ for each 
$j \neq i$, and
that the feasible set is continuous,
we can take $\epsdeltaone > 0$ such that
$(\bestparetopayoffone-\epsdeltaone, \pmmtwo + \delta_2, \dots, \pmmn + \delta_n)$ 
is Pareto efficient.
We will first construct subjective 
beliefs
for player 1
about the other players'
programs, and 
show that the best-response to 
these beliefs results in player~1 including points in their
renegotiation sets that 
give each other player~$j$ no more than 
$\pmmj + \delta_j$.
Then, if
each player~$j$ has symmetrical 
beliefs about player 1, 
the resulting subjective 
equilibrium is inefficient.  

\par Fix natural number $K$ and \selfunction{} function 
$\renegb$.
Abusing notation,
let~$\payi(\yudproj(\fullact)) = \min_{\genfullact \in \yudproji(\fullact)} \payi(\genfullact)$. (Notice that $\min_{\genfullact \in \yudproji(\fullact)} \payi(\genfullact) = \min_{\genfullact \in \yudprojj(\fullact)} \payi(\genfullact)$
for all~$j \neq i$.)
For $k=1,\dots, K$ and $\fullact$ 
we have for $j \neq 1$, $z \in \{x, y\}$:
\begin{equation*}
   \begin{aligned}
       \reneg^{1,(k)}(\reneg^{-1,(k),z}, \fullact) & =  
       \begin{cases}
       \left\{\fullact': \fullpay(\fullact') \succeq \fullpay(\yudproj(\fullact)), \payj(\fullact') \leq 
       \pmmj + \frac{k}{K}\delta_j \text{ for all $j \neq 1$}\right\}, \\
       \ \ \ \payj(\fullact) \leq 
       \pmmj + \frac{k}{K}\delta_j \text{ for all $j \neq 1$};\\
       \emptyset, \mathrm{ otherwise}.\end{cases};\\
       \reneg^{j,(k),x}(\reneg^{-j,(k)}, \fullact) & =  
       \begin{cases}
       \left\{\fullact': \fullpay(\fullact') \succeq \fullpay(\yudproj(\fullact)), u_1(\fullact') \leq \pmmone + \frac{k}{K}(\bestparetopayoffone-\pmmone) \right\}, \\
       \ \ \ u_1(\fullact) \leq \pmmone + \frac{k}{K}(\bestparetopayoffone - \pmmone);\\ 
       \emptyset, \mathrm{ otherwise}.\end{cases};\\
       \reneg^{j,(k),y}(\reneg^{-j,(k)}, \fullact) & =  
       \begin{cases}
       \left\{\fullact': \fullpay(\fullact') \succeq \fullpay(\yudproj(\fullact)), u_1(\fullact') \leq \pmmone + \frac{k}{K}(\bestparetopayoffone-\epsdeltaone-\pmmone) \right\}, \\
       \ \ \ u_1(\fullact) \leq \pmmone + \frac{k}{K}(\bestparetopayoffone - \epsdeltaone - \pmmone);\\ 
       \emptyset, \mathrm{ otherwise}.\end{cases}\\
   \end{aligned} 
\end{equation*}
Let $\progone$, $\progmone^x$, and $\progmone^y$  
be the 
CSR
programs defined respectively by these renegotiation functions,
along with some default programs which result 
in default payoffs Pareto-worse than
$\pmm$.
Using $\progone$ player 1 attains a payoff of 
at least $\bestparetopayoffone-\epsdeltaone$ against $\progmone^x$ 
and a payoff of exactly $u^*_1-\epsdeltaone$ against 
$\progmone^y$. Let $\beta=\beta_{1}(\progmone^x)$ and 
$1-\beta=\beta_{1}(\progmone^y)$. Thus player $1$'s 
expected payoff using $p_1$ is at least
$\bestparetopayoffone-\epsdeltaone$. Player 1 cannot 
improve their payoff against $\progmone^x$ or $\progmone^y$ by 
conceding more than $\epsdeltaone$, and conceding
strictly between $0$ and $\epsdeltaone$ will result in a payoff
of strictly less than $\bestparetopayoffone$ against
$\progmone^x$
and at most
$\pmmone + \frac{K-1}{K}(\bestparetopayoffone-\epsdeltaone
-\pmmone)$ against $\progmone^y$. Thus any program that concedes less than 
$\epsdeltaone$ has a payoff bounded above by 
$\beta\bestparetopayoffone + (1-\beta)(\pmmone + \frac{K-1}{K}(\bestparetopayoffone-\epsdeltaone
-\pmmone))$. We can choose $\beta$ small enough to make this smaller than 
$\bestparetopayoffone-\epsdeltaone$, such that~$p_1$ is a best response to beliefs~$\beta_{1}$. 
\par Now, we can construct
symmetric beliefs $\priorji$ for  each
player~$j$, such that a symmetric program $\progj$
is a best response to these beliefs, and $(\fullprog, \fullprior)$
is a subjective equilibrium. And, these programs
played against each other will result in a 
payoff profile Pareto-dominated by 
$\pmm + (\delta_1, \dots, \delta_n)$.
Thus the subjective equilibrium is 
inefficient. This is even though 
players' renegotiation sets 
overlap at each step of renegotiation, and
so their payoffs strictly improve 
at each step.



Checking that this 
subjective equilibrium
satisfies the assumption of Theorem~\ref{prop:yudpoint}:
Each player~$i$'s beliefs put probability~1
    on the other players using programs
    whose renegotiation sets~$\renegkj$
    are independent of~$\renegki$.
    Thus such a program does not respond differently
    to~$\progi$ and~$\newprogi$
    as defined in Definition~\ref{def:pmp-extension}.
    
\end{proof}

\end{document}